\newcommand{\bs}{\boldsymbol}
\newtheorem{theorem}{Theorem}[section]
\newtheorem{lemma}[theorem]{Lemma}
\newtheorem{proposition}[theorem]{Proposition}
\newtheorem{corollary}[theorem]{Corollary}
\theoremstyle{definition}
\newtheorem{definition}{Definition}[section]
\theoremstyle{remark}
\begin{document}

\title{\bf Isomorphisms in Multilayer Networks}
\author{Mikko Kivel\"a} 
\affiliation{Oxford Centre for Industrial and Applied Mathematics, Mathematical Institute, University of Oxford, Oxford OX2 6GG, UK}
\author{Mason A. Porter}
\affiliation{Oxford Centre for Industrial and Applied Mathematics, Mathematical Institute, University of Oxford, Oxford OX2 6GG, UK} 
\affiliation{CABDyN Complexity Centre, University of Oxford, Oxford OX1 1HP, UK}

\date{\today}

\begin{abstract}

We extend the concept of graph isomorphisms to multilayer networks with any number of ``aspects'' (i.e., types of layering). In developing this generalization, we identify multiple types of isomorphisms. For example, in multilayer networks with a single aspect, permuting vertex labels, layer labels, and both vertex labels and layer labels each yield different isomorphism relations between multilayer networks. Multilayer network isomorphisms lead naturally to defining isomorphisms in any of the numerous types of networks that can be represented as a multilayer network, and we thereby obtain isomorphisms for multiplex networks, temporal networks, networks with both of these features, and more. We reduce each of the multilayer network isomorphism problems to a graph isomorphism problem, where the size of the graph isomorphism problem grows linearly with the size of the multilayer network isomorphism problem. One can thus use software that has been developed to solve graph isomorphism problems as a practical means for solving multilayer network isomorphism problems. Our theory lays a foundation for extending many network analysis methods --- including motifs, graphlets, structural roles, and network alignment --- to any multilayer network.
\end{abstract}

\maketitle

\smallskip
\noindent \textbf{Keywords.} Multilayer networks, graph isomorphisms\\
\noindent \textbf{AMS subject classifications.} 05C82, 68R10, 91D30


\section{Introduction}

Network science has been very successful in investigations of a wide variety of applications in a diverse set of disciplines. In many situations, it is insightful to use a naive representation of a complex system as a simple, binary graph, which allows one to use the powerful methods and concepts from graph theory and linear algebra; and numerous advances have resulted from this perspective \cite{Newmanbook}.
As network science has matured and as ever more complicated data have become available, it has become increasingly important to develop tools to analyze more complicated graphical structures \cite{Kivela2014Multilayer,Boccaletti2014}. For example, many systems that were typically studied initially as ordinary, time-independent graphs are now often represented as time-dependent networks~\cite{Holme2012Temporal}, networks with multiple types of connections \cite{KyuMin2015Towards}, or interdependent networks \cite{Gao2012Networks}. 
Recently, a multilayer-network framework was developed to represent a large number of such networked systems \cite{Kivela2014Multilayer}, and the study of multilayer networks has rapidly become arguably the most prominent area of network science. It has achieved important results in a diverse set of fields, including disease dynamics \cite{bauch2015}, functional neuroscience \cite{bassett2016}, ecology \cite{pilosof2017}, international relations \cite{Cranmer:2014ut}, transportation \cite{gallotti2014anatomy}, and more.

With the additional freedom in representing a multilayer network, numerous ways to generalize network concepts have emerged \cite{Kivela2014Multilayer,Boccaletti2014}. The different definitions can arise from different modeling choices and assumptions, which also have often been implicit (rather than explicit) in many publications. To make sense of the multitude of terminology and develop systematic methods for studying multilayer networks, one needs to start from first principles and define the fundamental concepts that underlie the various methods and techniques from network analysis that one seeks to generalize. For example, exploring the fundamental question, ``How is a walk defined in multilayer networks?'', led to breakthroughs in generalizing concepts such as clustering coefficients \cite{Cozzo2013Clustering,mendes2016}, centrality measures \cite{DeDomenico2013Mathematical,DeDomenico2014Navigability,Sole2014Centrality}, and community structure \cite{Mucha2010Community,DeDomenico2015identifying,jeub2016} in multilayer networks. In this article, we answer another fundamental question: ``When are two multilayer networks equivalent structurally?'' by generalizing the concept of graph isomorphism to multilayer networks. 

Any attempt to generalize a method that relies on graph isomorphisms to multilayer networks also necessitates generalizing the concept of graph isomorphisms.
 Very recently, there has been work on methods relying on (some times implicit) generalizations of graph isomorphism---especially in the context of small subgraphs known as ``motifs'' \cite{Milo2002Network}---for many network types that can be represented as multilayer networks~\cite{Kovanen2011Temporal,Wehmuth2015Multiaspect,Taylor2007Network,Paulau2015Motif,Bentley2016Multilayer,batt2016}. Further, other network analysis tools, such as structural roles  \cite{Borgatti1992Notions,rossi2015} and network comparison methods \cite{Conte2004Thirty,Kelley2003Conserved,Prvzulj2007Biological,Rito2010Threshold,Ali2014Alignment}, are based on graph isomorphisms.

Defining isomorphisms for multilayer networks yields isomorphism relations for each of the wide variety of network types that can be expressed using a multilayer-network framework. For example, one obtains isomorphisms for multiplex networks (in which edges are colored), interconnected networks (in which vertices are colored), and temporal networks~\cite{Kivela2014Multilayer}.
Instead of defining isomorphisms and related methods and tools separately for each type of network, we develop a general theory and set of tools that can be used for any types of multilayer network \cite{pymnet}. With our contribution, we hope to avoid a confusing situation in the literature in which elementary concepts, terminology, tools, and theory are developed independently for the various special types of multilayer networks.

The rest of this paper is organized as follows.
In Section~\ref{sec:basics}, we introduce the basics concepts, lay out the ideas behind multilayer isomorphisms, and summarize the results of our article.
In Section~\ref{sec:defs}, we give the permutation-group formulation of multilayer network isomorphisms and enumerate some basic properties of multilayer network isomorphisms and related automorphism groups. 
In Section~\ref{sec:computation}, we show how to solve a multilayer network isomorphism problem computationally by reducing it to an isomorphism problem in a vertex-colored graph. This reduction allows one to use graph isomorphism software packages to solve the multilayer network isomorphism problem, and we use it to show that multilayer isomorphism problems are in the same computational complexity class as the graph isomorphism problem. 
We provide tools for producing the reductions as a part of a multilayer analysis software \cite{pymnet}.
In Section~\ref{sec:examples}, we give examples of how one can use multilayer network isomorphisms for multiplex networks, temporal networks, and interconnected networks. Finally, in Section~\ref{sec:conclusions}, we conclude and discuss future research directions.


\section{Basic Concepts and Summary of Results}
\label{sec:basics}

\subsection{Multilayer Networks}

In recent years, there has been a growing interest in generalizing the concept of graphs in various ways to study graphical objects that are better suited for representing specific real-world systems. This has allowed increasingly realistic investigations of complex networked systems, but it has also introduced mathematical constructions, jargon, and methodology that are specific to research in each type of system.  The rapid development of such jargon has been overwhelming, and it has sometimes led to confusion and inconsistencies in the literature \cite{Kivela2014Multilayer}.

To unify the rapidly exploding, disparate language (and disparate notation) and to bring together the multiple concepts of generalized networks that include layered graphical structures, the concept of a ``multilayer network'' was developed recently ~\cite{DeDomenico2013Mathematical,Kivela2014Multilayer}. Reference~\cite{Kivela2014Multilayer} includes a list of about 40 mathematical constructions that can be represented using the framework of multilayer networks. Most of these structures are variations either of graphs in which vertices are ``colored'' (\emph{i.e.}, vertex-colored graphs, see \ref{sec:vertexcoloredisom} and \ref{sec:vertexcolored}) or of graphs in which edges are ``colored'' (\emph{i.e.}, multiplex networks, see \ref{sec:multiplex}). Both types of coloring can also occur in the same system, various types of temporal networks admit a natural representation as a multilayer network~\cite{Kivela2014Multilayer}, and other types of complications can also arise. This new unified framework has opened the door for the development of very general, versatile network concepts and methods, and the study of multilayer networks has rapidly become arguably the most prominent area of network science. See \cite{Kivela2014Multilayer,Boccaletti2014} for reviews of progress in the study of multilayer networks.

\begin{figure}[!htp]
\centering
\includegraphics[width=0.4\linewidth]{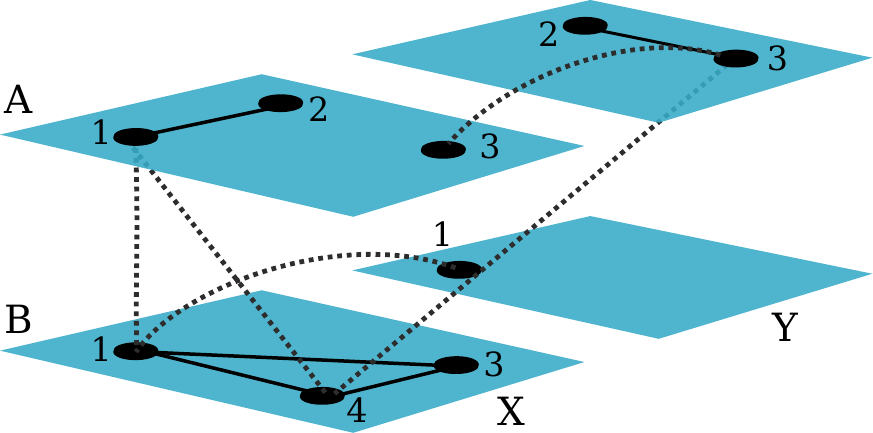}
\caption{Example of a multilayer network with two aspects. In this graphical structure, each entity can have an associated \emph{vertex-layer tuple} in one or more \emph{layers}, which are also organized into combinations of \emph{elementary layers}. In this example, each layer has either $A$ or $B$ as the first elementary layer, and it has either $X$ or $Y$ as the second elementary layer. There are thus $4$ layers in total, and entities can have an associated vertex-layer tuple in one, two, three, or all four layers. [This plot is inspired by a figure in \protect\cite{Kivela2014Multilayer}.]
}
\label{fig:mlayer}
\end{figure}

The formal definition of a multilayer network needs to be able to include the various layered network structures in the literature.  In Ref.~\cite{Kivela2014Multilayer}, we (and our collaborators) defined a multilayer network as a quadruplet $M=(V_M,E_M,V,\mathbf{L})$~\cite{Kivela2014Multilayer}: The set $V$ consists of the vertices of a network, just as is an ordinary graph. Each vertex resides in one or more uniquely-named \emph{layers} that are combinations of exactly $d$ \emph{elementary layers}, where each of these elementary layers corresponds to an ``aspect''. That is, each aspect is a different type of layering. For example, a social network that changes in time and includes social interactions over multiple communication channels has two aspects---one for time and the other for the type of social interaction---and so a layer represents one type of social interaction at a given time. The sequence $\mathbf{L}=\{ L_a \}_{a=1}^d$ consists of the sets of elementary layers for each of the $d$ aspects, and we use the symbol $\hat{L} = L_1 \times \dots \times L_d$ to denote the set of layers. Each vertex can be either present or absent in a layer, and we indicate the presence of a vertex by including its combination with the layer in the set $V_M \subseteq V \times L_1 \times \dots \times L_d$ of \emph{vertex-layer tuples}. Finally, we define the set $E_M \subseteq V_M \times V_M$ of edges between pairs of vertex-layer tuples as in ordinary graphs. See Fig. \ref{fig:mlayer} for an example of a multilayer network with two aspects.


\subsection{Isomorphisms in Graphs and Multilayer Networks}

A graph isomorphism formalizes the notion of two graphs having equivalent structures. 
The structure is what is left in a graph when one disregards vertex labels. That is, two graphs are isomorphic if one can transform one graph to the other by renaming the vertices in one of the graphs. Note that the edges do not have their own labels but they are determined by the vertex labels of the two endpoints, and those labels are also updated in the transformation.

To be able to give a mathematical definition of a graph isomorphism, we first define a \emph{vertex map} as a bijective function $\gamma: V \to V^\prime$ that relabels each vertex of the graph $G=(V,E)$ with another distinct label.  We use the following notation to relabel vertices of $G$ using $\gamma$: 
\begin{enumerate}
\item[(1)] $V^\gamma = \{ \gamma(v) \mid v \in V  \} $\,;
\item[(2)] $E^\gamma = \{ (\gamma(v),\gamma(u)) \mid (v,u) \in E \} $\,;
\item[(3)] $G^\gamma = (V^\gamma,E^\gamma)$\,.
\end{enumerate}
With this notation, two graphs $G$ and $G^\prime$ are \emph{isomorphic} if there exists $\gamma$ such that $G^\gamma=G^\prime$.

One can define isomorphisms for multilayer networks in very similar manner. The idea is again that two networks are equivalent structurally  if the vertices in one of them can be relabeled so that the first network is turned into exactly the second one. To do this, we need some additional (and slightly more cumbersome) notation:
\begin{enumerate}
\item[(1)] $V_M^\gamma = \{ (\gamma(v),\bs\alpha) \mid (v,\bs\alpha) \in V_M \} $\,;
\item[(2)] $E_M^\gamma = \{ ((\gamma(v),\bs\alpha),(\gamma(u),\bs\beta)) \mid ((v,\bs\alpha),(u,\bs\beta)) \in E_M \} $\,;
\item[(3)] $M^\gamma = (V_M^\gamma,E_M^\gamma,V^\gamma,\mathbf{L})$\,.
\end{enumerate}
Note that $(v,\bs\alpha) = (v,\alpha_1,\dots,\alpha_d)$, where $\bs\alpha$ is a vector of layers.
With the above definitions, we can now say that two multilayer networks $M$ and $M^\prime$ are \emph{vertex-isomorphic} when there exists $\gamma$ such that $M^\gamma = M^\prime$.

A vertex isomorphism is a natural extension of the standard graph isomorphism to multilayer networks, but it is not the only one. In a vertex isomorphism, one disregards only the vertex labels (but retains the layer labels) when comparing two multilayer networks. This choice is justifiable in some applications, but in others one might wish to also disregard the layer labeling. For example, one can map temporal networks into multilayer networks so that each time instance is a layer \cite{Kivela2014Multilayer}, and in this case two temporal networks are vertex-isomorphic if (1) the network has the same structure and order of structural changes and (2) the exact timings the structural changes are equal.
However, if one is interested only in the relative order of the changes that take place in the network, one needs to be able to also disregard the layer labels. To do this, one can proceed in very similar way as for vertices, as it requires a function to relabel the layers. Specifically, we say that a bijective function $\delta_a: L_a \to L_a^\prime$ is an \emph{elementary-layer map} that renames the elementary layers of a network.

\begin{figure}[!htp]
\includegraphics[width=0.7\linewidth]{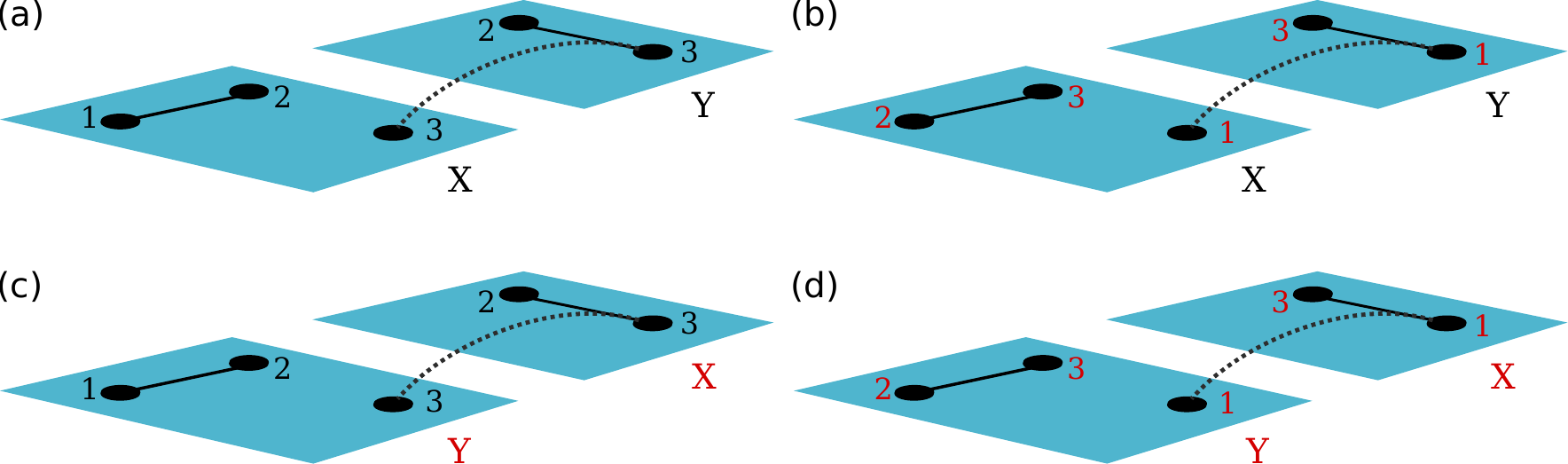}
\caption{Four examples of multilayer networks with one aspect: (a) $M_a$, (b) $M_b$, (c) $M_c$, and (d) $M_d$. The multilayer network $M_a$ is vertex-isomorphic to $M_b$, because there is a permutation $\gamma= (1 \, 2 \, 3)$ of vertex labels such that $M_a^\gamma=M_b$. We can thus write $M_a \cong_0 M_b$. The network $M_a$ is layer-isomorphic to $M_c$, and we write $M_a \cong_1 M_c$ because there is a permutation $\delta = (X \, Y)$ of layer labels such that $M_a^\delta=M_c$. The network $M_a$ is also vertex-layer isomorphic to $M_d$, and we write $M_a \cong_{0,1} M_d$ because there is a vertex-layer permutation $\zeta = (\gamma,\delta)$ such that $M_a^\zeta=M_d$. Note that $M_a$ is not vertex-isomorphic to $M_c$ or $M_d$, and it is not layer-isomorphic to $M_b$ or $M_d$. However, $M_a$ is vertex-layer isomorphic both to $M_b$ and to $M_c$.
}
\label{fig:isomex}
\end{figure}

One also may want to be able to relabel all of the elementary layers or only a subset of them in a multi-aspect multilayer network. We define a function $\bs\delta: \hat{L} \to \hat{L}^\prime$ that relabels all elementary layers, $\bs\delta(\bs\alpha)=(\delta_1(\alpha_1), \dots, \delta_d(\alpha_d))$, and call it a \emph{layer map}. A \emph{partial layer map} $\bs\delta_{j_1,\dots,j_k}$ is a layer map for which $\delta_a=1$ if $a \neq j_l$ for all $l$ and where $1$ is an identity map. That is, a partial layer map only relabels elementary layers that use some subset of all aspects. We say that these aspects $j_l$ are ``allowed'' to be mapped. 
We are now ready to define notation that formalizes the above ideas of how layer maps affect multilayer networks:
\begin{enumerate}
\item[(1)]$\mathbf{L}^{\bs\delta} =  \{ L_a^{\delta_a} \}_{a}^d $ and $L_a^{\delta_a} = \{ \delta_a(\alpha) \mid \alpha \in L_a  \} $\,;
\item[(2)]$V_M^{\bs\delta} = \{ (v,\bs\delta(\bs\alpha)) \mid (v,\bs\alpha) \in V_M \} $\,;
\item[(3)]$E_M^{\bs\delta} = \{ ((v,\bs\delta(\bs\alpha)),(u,\bs\delta(\bs\beta))) \mid ((v,\bs\alpha),(u,\bs\beta)) \in V_M \} $\,;
\item[(4)]$M^{\bs\delta} = (V_M^{\bs\delta},E_M^{\bs\delta},V,\mathbf{L}^{\bs\delta})$\,.
\end{enumerate}
We can now say that two multilayer networks are \emph{layer-isomorphic} when there exists a $\bs\delta$ such that  $M^{\bs\delta} = M^\prime$.
Because of the intrinsic complications in defining general multilayer networks with any arbitrary number of aspects, the above notation is a bit cumbersome. In Section \ref{next}, we will make the notation less cumbersome, at the cost of also making it less explicit.

We have defined isomorphisms related to relabeling either vertices or layers, but there is no reason why one cannot simultaneously do both of these. We thus define the \emph{vertex-layer map} $\bs\zeta=(\gamma,\delta_1, \dots, \delta_d)$ as a combination of a vertex map $\gamma$ and a layer map $\bs\delta$. A vertex-layer map acts on a multilayer network such that a vertex-map and layer-map act sequentially on the network: $M^{\zeta} = (M^\gamma )^{\bs\delta}$. Clearly, the order in which the vertices and layers are relabeled does not matter, and vertex maps and layer maps commute with each other, so $(M^\gamma )^{\bs\delta}=(M^{\bs\delta})^\gamma $. The vertex-layer maps can be used to define vertex-layer isomorphisms in the same way as one defines vertex isomorphisms and layer isomorphisms.

We now collect all of our definitions of multilayer-network isomorphisms.
\begin{definition}
Two multilayer networks $M$ and $M^\prime$ are
\begin{enumerate}
\item[(1)]\emph{vertex-isomorphic} if there is a vertex map $\gamma$ such that $M^\gamma=M^\prime$\,;
\item[(2)]\emph{layer-isomorphic}  if there is a layer map $\bs\delta$ such that $M^{\bs\delta}=M^\prime$\,;
\item[(3)]\emph{vertex-layer-isomorphic} if there is a vertex-layer map $\bs\zeta  = (\gamma,\bs\delta)$ such that $M^{\bs\zeta}=M^\prime$\,.
\end{enumerate}
Layer isomorphisms and vertex-layer isomorphisms are called \emph{partial isomorphisms} if the associated layer maps are partial layer maps.
\end{definition}

We use the notation $M\cong_{0}M^\prime$ to indicate that networks $M$ and $M^\prime$ are vertex-isomorphic. We indicate partial layer isomorphisms by listing the aspects that are allowed to be mapped (i.e., aspects that do not correspond to identity maps in the partial layer map) as subscripts. If the layer isomorphism is not partial, we list all of the aspects of the network. We use almost the same notation for vertex-layer isomorphisms, where the only difference is that we include $0$ as an additional subscript. For example, for a single-aspect multilayer network, $\cong_{1}$ denotes a layer isomorphism and $\cong_{0,1}$ denotes a vertex-layer isomorphism.  For partial layer isomorphisms and vertex-layer isomorphisms, we use a comma-separated list in the subscript to indicate the aspects that one is allowed to map. For example, $\cong_{2}$ is a partial layer isomorphism on aspect $2$, and $\cong_{0,1,3}$ signifies a vertex-layer isomorphism in which one is allowed to map aspects $1$ and $3$ but for which the layers in aspect $2$ (and in any aspects larger than $3$) are not allowed to change. We will explain the reason for this notation in Section \ref{next}.

We give examples of a vertex isomorphism, a layer isomorphism, and a vertex-layer isomorphism in Fig.~\ref{fig:isomex}.


\subsection{Summary of Results and Practical Implications}

\subsubsection{Applications of Isomorphisms}

The idea of a graph isomorphism is one of the central concepts in graph theory and network science, and it is an important underlying concept for many methods of network analysis---including motifs \cite{Milo2002Network}, graphlets \cite{Prvzulj2004Modeling,Prvzulj2007Biological}, graph matching~\cite{Conte2004Thirty,Kelley2003Conserved}, network comparisons \cite{Prvzulj2007Biological,Rito2010Threshold,Ali2014Alignment}, and structural roles \cite{Borgatti1992Notions}. Defining multilayer-network isomorphisms thus builds a foundation for future work by allowing generalization of all of these ideas for multilayer networks. Multilayer-network isomorphisms can also be used to define methods and concepts that are not intrinsic to graphs. For example, one can classify multilayer-network diagnostics and methods based on the types of multilayer isomorphisms under which they are invariant.  


\subsubsection{Applications of Automorphisms}

The modeling flexibility added by layered structures in networks has led to the discovery of qualitatively new phenomena (e.g., novel types of phase transitions) for processes such as disease spread and percolation \cite{Kivela2014Multilayer,domen2016,salehi2015}. It is interesting to examine how multilayer network architectures affect structural features such as the graph symmetries. One can study symmetries using automorphism groups of graphs, as these enumerate the ways in which vertices can be relabeled without changing a graph. We formulate the idea of graph automorphism groups for multilayer networks in Section \ref{sec:defs}, and we introduce a simplifying notation in which we think of vertices as a ``0th aspect''.  We show that combining maps of different aspects preserves all of the symmetries that are present in these aspects, but that completely new symmetries can result from combining these maps (see Proposition \ref{prop:autoprop}). For example, if a symmetry exists under vertex-isomorphism or layer isomorphism, it must also exist under vertex-layer isomorphism. However, a vertex-layer isomorphism can lead to symmetries that are not present under either vertex isomorphism or layer isomorphism.

The multilayer network automorphisms that we define in the present work generalize
notions of structural equivalence of vertices (or, more precisely, ``role equivalence'', ``role coloring'', or ``role assignment'') \cite{rossi2015,Borgatti1992Notions}.
Other related notions of structural equivalences have been defined in specific types of multilayer networks. For example, in social networks with multiple types of relations between vertices, one can study 
the ``block models'' that one obtains by considering different types of homomorphisms~\cite{Lorrain1971Structural,White1983Graph,Boyd1992Relational,boydbook}.
Additionally, in coupled-cell networks (which can have multiple types of edges and vertices), the automorphism groups and groupoids---which one obtains by relaxing the global condition for automorphisms---have a strong influence on the qualitative behavior of dynamical systems on such networks~\cite{Golubitsky2005Patterns,Golubitsky2006Nonlinear,Golubitsky2015Recent}.


\subsubsection{Aspect Permutations}

In our definition of multilayer networks, the elementary layers are ordered, and it is important to note that this is simply for bookkeeping purposes. Additionally, one can think of the vertices as elementary layers of a 0th aspect: from a structural point of view, the vertices are the same as other types of elementary layers. This is evident from the definition of multilayer networks, but it is far from evident in typical illustrations, in which vertices and layers are visualized, respectively, as points and planes. One can permute the order in which elementary layers are introduced, and isomorphism relations remain the same as long as the aspects in which the renamings are allowed are permuted accordingly (see Section \ref{sec:aspect_permutations}).

\subsubsection{Practical Computations}

For practical uses, it is important that the various types of multilayer isomorphisms can be computed in a simple and efficient way. It is a standard practice to solve this type of computational problem by reducing the problem to an isomorphism problem in (colored) graphs by constructing auxiliary graphs and then applying existing software tools for finding graph isomorphisms~\cite{Mckay2014Practical}. The auxiliary graphs can become complicated as the number $d$ of aspects grows, and a slightly different auxiliary graph construction procedure needs to be defined for all $2^d$ types of isomorphisms. In Section \ref{sec:computation}, we show how to construct such auxiliary graphs for general multilayer networks in a way that the size of the problem grows only linearly with the size of the multilayer network. This opens up a very straight forward and efficient way to apply our approach for practical data analysis of any kind of multilayer networks without requiring knowledge of reductions or explicit construction of auxiliary graphs.

\subsubsection{Application to Specific Network Types}

Most studies of multilayer networks usually consider specific types of multilayer networks rather than studying them in their most general form \cite{Kivela2014Multilayer}.
In Section \ref{sec:examples}, we show how the theory of multilayer isomorphisms can be applied to some of the most typical types of networks: multiplex networks, vertex-colored networks (i.e., networks of networks), and temporal networks. We also illustrate how the different implicit isomorphism definitions for temporal networks from the literature \cite{Kovanen2011Temporal} are related to our multilayer isomorphisms (see Section \ref{sec:temporal}). 
Our isomorphism definitions for multilayer networks are explicit, and anyone who is familiar with multilayer isomorphism can very easily transfer that knowledge to isomoprhisms in temporal networks.

One of the most prominent use of graph isomorphisms is motif analysis, in which all subgraphs of a network are grouped into isomorphism classes and the numbers of subgraphs in each class are examined \cite{Milo2002Network}. For both computational tractability and the ability to interpret the results of such calculations, such analysis typically relies on using a reasonably small number of isomorphism classes. This limits the sizes of subgraphs that are studied, as the number of isomorphism classes grows very rapidly as a function of number of vertices. Similarly, the number multilayer-network isomorphism classes grows very rapidly both as a function of the number of vertices and as a function of the number of layers. Consequently, the same limitations of motif analysis that apply to ordinary graphs also apply for multilayer networks. In Section \ref{sec:multiplex}, we examine the growth of the number of isomorphism classes in multiplex networks. This illustrates the type of compromise that one needs to make in the number of vertices and layers that can be considered in a subnetwork to ensure that the number of isomorphism classes is reasonable.


\section{Permutation Formulation and Properties of Multilayer Isomorphisms} \label{sec:defs}

We now show how to formulate the multilayer-network isomorphism problem in terms of permutation groups, and we give some elementary results for multilayer-network isomorphisms and related automorphism groups.


\subsection{Permutation Formulation of Multilayer Isomorphisms}\label{next}

We limit our attention (without loss of generality) to multilayer networks $\mathcal{M}$ in which each of the networks has the same set $V$ of vertices and same sets $\{ L_a \}_{a=1}^d$ of elementary layers.\footnote{For notational convenience in Section~\ref{sec:computation}, we assume that the vertices and layers can always be distinguished from each other. That is, we assume that the vertex set and the layer sets are distinct from each other and that any Cartesian product of the vertices and elementary layers are distinct from each other.}

We can now formulate the isomorphism theory using permutation groups. Vertex maps are permutations acting on the vertex set $V$, and elementary layer maps are permutations acting on elementary layer sets $L_a$. 
If we construe the group operation as the combination of two permutations, then all possible vertex maps form the symmetric group $S_V$, and all possible elementary layer maps for a given aspect $a$ form another symmetric group $S_{L_a}$
 (i.e., $\gamma \in S_V$, and $\delta_a \in S_{L_a}$).
The vertex-layer maps are given by a direct product of the symmetric groups of vertices and of elementary layers.

For notational convenience, we define the set of vertices to be the ``0th aspect'' (i.e., we define $L_0=V$). We also introduce the following notation for vertex-layer tuples: $\mathbf{v}=(v,\bs\alpha)$.  By convention, we define subscripts for vertex-layer tuples so that $v_0=v$ and $v_a=\alpha_a$ for $a > 0$, where $v \in V$ and $\bs\alpha \in \hat{L}$. It is also convenient to use $\mathbbm{1}_C$ to denote a group that consists of the identity permutation over elements of the set $C$.
Additionally, recalling that $\bs\zeta=(\gamma,\delta_1, \dots, \delta_d)$, it is convenient to use the notation ${\bf v}^{\bs\zeta}=\bs\zeta({\bf v})=(\gamma(v),\bs\delta(\bs\alpha))$ and $v_a^{\zeta_a}=\zeta_a(v_a)$.

We let $p \subseteq \{ 0,1,\dots,d \}$ (with $|p|\geq 1$) denote the set of aspects that can be permuted. Given $p$, we can then define permutation groups
\begin{equation}
	P_p = D_0^p \times \dots \times D_d^p\,,
\label{eq:ppdef}
\end{equation}
where $D_a^p=S_{L_a}$ if $a \in p$ and  $D_a^p=\mathbbm{1}_{L_a}$ if $a \notin p$. We denote the complementary set of aspects by $\overline{p}=\{ 0,1,\dots,d \} \setminus p$.

We obtain \emph{vertex permutations} for $p=\{ 0\} $, \emph{layer permutations} when $0 \notin p$, and \emph{vertex-layer permutations} when $0 \in p$ and $|p|>1$. Layer permutations or vertex-layer permutations are \emph{partial permutations} if there exists $a \in \{ 1, \dots d\}$ such that $a \notin p$.

We can now define multilayer-network isomorphisms for a set of multilayer networks $\mathcal{M}$.
\begin{definition}
Given a nonempty set $p$, the multilayer networks $M,M^\prime \in \mathcal{M}$ are \emph{$p$-isomorphic} if there exists $\bs\zeta \in P_p$ such that $M^{\bs\zeta}=M^\prime$. We write $M \cong_p M^\prime$.
\end{definition}

We denote the set of all isomorphic maps from $M$ to $M^\prime$ by $\mathrm{Iso}_p(M,M^\prime)=\{ \bs\zeta \in P_p: M^{\bs\zeta} = M^\prime \}$. Similarly, we use $\mathrm{Aut}_{p}(M)=\mathrm{Iso}_{p}(M,M)$ to denote the automorphism group of the multilayer network $M$.


\subsection{Basic Properties of Automorphism Groups}

In Eq.~(\ref{eq:ppdef}), we constructed the groups $P_p$ as direct products of symmetric groups and groups that contain only an identity element. The automorphism groups are subgroups of these groups: $\mathrm{Aut}_p(M) \leq P_p$. 
A permutation remains in the automorphism group even if we allow more aspects to be permuted (i.e., if the set $p$ is larger), and permutations that use only a given set of aspects are independent of permutations that use only other aspects. We formalize these insights in the following proposition.
\begin{proposition}
The following statements are true for all $M$ and $|p| > 0$:
\begin{enumerate}
\item[(1)] $\mathrm{Aut}_{p^\prime}(M) \leq \mathrm{Aut}_p(M)$ if $p^\prime \subseteq p$\,;
\item[(2)] $\mathrm{Aut}_{p_1}(M)\mathrm{Aut}_{p_2}(M) \leq \mathrm{Aut}_p(M)$ if $p_1,p_2 \subseteq p$, with $p_1 \cap p_2 = \emptyset$\,; 
\item[(3)] $\prod_i \bs\zeta^{(i)} = \bs{1} \implies \bs\zeta^{(i)}=\bs{1}$ for all $i$ if $\bs\zeta^{(i)} \in \mathrm{Aut}_{p_i}(M)$ and $p_i \cap p_j = \emptyset$ for all $i \neq j$\,.
\end{enumerate}
\label{prop:autoprop}
\end{proposition}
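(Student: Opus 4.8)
The plan is to reduce all three statements to the direct-product structure of the groups $P_p$ together with the fact that relabeling is a group action on $\mathcal{M}$. First I would fix the ambient group $\hat{P} = \prod_{a=0}^d S_{L_a}$ and record three structural facts. (i) Writing elements of $\hat{P}$ as tuples $\bs\zeta = (\zeta_0,\dots,\zeta_d)$ multiplied componentwise, one has $P_p = \{ \bs\zeta \in \hat{P} : \zeta_a \text{ is the identity for all } a \notin p \}$; in particular an element of $P_p$ is ``supported'' only on the aspects in $p$. (ii) If $p^\prime \subseteq p$, then $a \notin p \Rightarrow a \notin p^\prime$, so every $\bs\zeta \in P_{p^\prime}$ also lies in $P_p$, whence $P_{p^\prime} \leq P_p$. (iii) Relabeling is an action, $M^{\bs\zeta\bs\eta} = (M^{\bs\zeta})^{\bs\eta}$ (this follows from the sequential-application definition of $M^{\bs\zeta}$ together with the commuting of vertex and layer relabelings noted earlier in Section~\ref{sec:basics}), and two permutations supported on disjoint aspect sets commute as elements of $\hat{P}$.

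Statement~(1) is then immediate: if $\bs\zeta \in \mathrm{Aut}_{p^\prime}(M)$, then $\bs\zeta \in P_{p^\prime} \leq P_p$ by (ii) and $M^{\bs\zeta} = M$, so $\bs\zeta \in \mathrm{Aut}_p(M)$; since $\mathrm{Aut}_{p^\prime}(M)$ is a group contained in $\mathrm{Aut}_p(M)$, it is a subgroup of it. For Statement~(2), write $H = \mathrm{Aut}_{p_1}(M)$ and $K = \mathrm{Aut}_{p_2}(M)$, both contained in $\mathrm{Aut}_p(M)$ by (1). For $\bs\zeta \in H$ and $\bs\eta \in K$, fact (iii) gives $M^{\bs\zeta\bs\eta} = (M^{\bs\zeta})^{\bs\eta} = M^{\bs\eta} = M$ with $\bs\zeta\bs\eta \in P_p$, so $HK \subseteq \mathrm{Aut}_p(M)$. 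What still needs checking is that $HK$ is itself a group: because $H$ and $K$ are supported on the disjoint sets $p_1$ and $p_2$, their elements commute by (iii), hence $HK = KH$, and the product of two subgroups is a subgroup precisely when it is closed under swapping the factors in this way. (Equivalently, $P_{p_1} \cap P_{p_2} = \{\bs{1}\}$ by (i), so $HK$ is an internal direct product $H \times K$.)

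For Statement~(3), I would argue componentwise in $\hat{P}$. Each $\bs\zeta^{(i)} \in \mathrm{Aut}_{p_i}(M) \leq P_{p_i}$ has $\zeta^{(i)}_a$ equal to the identity for every $a \notin p_i$ by (i). Since the $p_i$ are pairwise disjoint, for each aspect $a$ at most one index $i$ satisfies $a \in p_i$, so the $a$-th component of $\prod_i \bs\zeta^{(i)}$ reduces to the single factor $\zeta^{(i)}_a$ for that unique $i$ (and is the identity if no such $i$ exists). Imposing $\prod_i \bs\zeta^{(i)} = \bs{1}$ forces every component to be the identity, whence $\zeta^{(i)}_a$ is the identity for all $a \in p_i$; combined with triviality off $p_i$, this yields $\bs\zeta^{(i)} = \bs{1}$ for every $i$.

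The arguments are largely bookkeeping once the direct-product picture of (i) is in place. The one step requiring genuine care is the subgroup closure in (2): it is not enough to show that each individual product $\bs\zeta\bs\eta$ lands in $\mathrm{Aut}_p(M)$, and one must invoke the disjointness $p_1 \cap p_2 = \emptyset$ to obtain commutativity and hence $HK = KH$. By contrast, the automorphism hypothesis in (3) is used only to guarantee that $\bs\zeta^{(i)}$ is supported on $p_i$; the implication itself is a purely group-theoretic consequence of the disjointness of the $p_i$.
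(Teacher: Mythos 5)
Your proposal is correct and follows essentially the same route as the paper's proof: statement (1) by direct containment $P_{p^\prime} \leq P_p$, statement (2) by establishing componentwise commutativity of elements supported on the disjoint sets $p_1$ and $p_2$ so that the product of the two subgroups is itself a group, and statement (3) by the aspect-by-aspect observation that each component of the product reduces to a single factor. Your added scaffolding (the ambient group $\hat{P}$, the explicit action property, and the internal-direct-product remark) makes the bookkeeping more explicit but does not change the argument.
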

For a proof, see Section \ref{sec:proofs}.

It is important to observe in claim (2) of Proposition~\ref{prop:autoprop} that the subgroup relation can be proper even if $p=p_1 \cup p_2$. That is, the relationship $\mathrm{Aut}_{p_1}(M)\mathrm{Aut}_{p_2}(M) = \mathrm{Aut}_{p_1 \cup p_2}(M)$ is not always true, but one can combine permutations in $P_{p_1}$ and $P_{p_2}$ that are not in the automorphism groups $\mathrm{Aut}_{p_1}$ or $\mathrm{Aut}_{p_2}$ to obtain a permutation that is in $\mathrm{Aut}_{p_1 \cup p_2}(M)$. We give an example in Fig.~\ref{fig:permex}.

\begin{figure}[!htp]
\centering
\includegraphics[width=2.5in]{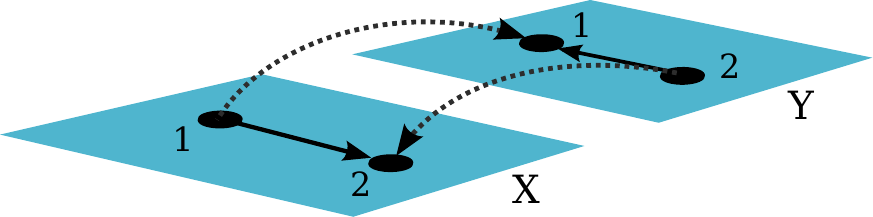}
\caption{An example demonstrating that one cannot always construct multilayer-network automorphism groups by combining smaller automorphism groups. 
That is, $\mathrm{Aut}_{p_1}(M)\mathrm{Aut}_{p_2}(M) \neq \mathrm{Aut}_{p_1 \cup p_2}(M)$ in this example.
In a directed multilayer network $M$ with edge set $E_M=\{ [(1,X),(2,X)],[(1,X),(1,Y)],[(2,Y),(1,Y)],[(2,Y),(2,X)] \}$, both the vertex automorphism group $\mathrm{Aut}_{\{ 0 \} }(M)$ and the layer automorphism group $\mathrm{Aut}_{\{ 1 \} }(M)$ are groups whose only permutation is the identity permutation, but the vertex-layer automorphism group $Aut_{\{ 0,1 \} }(M)$ has a permutation $((1 \, 2),(X \, Y))$ in addition to the identity permutation.
}
\label{fig:permex}
\end{figure}


\subsection{Aspect Permutations}
\label{sec:aspect_permutations}

In the definition of multilayer networks, the order in which one introduces different types of elementary layers (i.e., aspects) only matters for bookkeeping purposes. For example, for a system that is represented as a multilayer network with two aspects, $\mathcal{A}$ and $\mathcal{B}$, it does not matter if we assign index $1$ to aspect $\mathcal{A}$ and index $2$ to aspect $\mathcal{B}$ or index $1$ to aspect $\mathcal{B}$ and index $2$ to aspect $\mathcal{A}$. The isomorphisms of type $\cong_1$ and $\cong_2$ in the former case become the isomorphisms of type $\cong_2$ and $\cong_1$ in the latter case, and vice versa. 
Similar reasoning holds even if we consider the vertices to be a ``0th aspect'', as we did in Section~\ref{next}.

To formalize the above idea, we introduce the idea of \emph{aspect permutations} as permutations of indices of the aspects (including the 0th aspect). We then show that multilayer-network isomorphisms are invariant under aspect permutations as long as the indices in the set $p$ of aspects that are not restricted to identity maps are permuted accordingly.

\begin{definition}
Let $\sigma \in S_{\{ 0, \dots, d\} }$ be a permutation of aspect indices. We define an \emph{aspect permutation} of a multilayer network as 
$A_\sigma (M)=(V_M^\prime,E_M^\prime,V^\prime,{\bf L}^\prime)$, where
\begin{enumerate}
\item[(1)]$V_M^\prime =\{ (v_{\sigma^{-1}(0)},\dots, v_{\sigma^{-1}(d)}) \mid {\bf v} \in V_M \} $\,;
\item[(2)]$E_M^\prime = \{ [(v_{\sigma^{-1}(0)},\dots,v_{\sigma^{-1}(d)}),(u_{\sigma^{-1}(0)},\dots,u_{\sigma^{-1}(d)}) ] \mid ({\bf v},{\bf u}) \in E_M \} $\,;
\item[(3)]$V^\prime = L_{\sigma^{-1}(0)}$\,;
\item[(4)]${\bf L}^\prime = \{ L_{\sigma^{-1}(a)} \}_{a=1}^d $\,.
\end{enumerate}
\end{definition}

See Fig.~\ref{fig:transposition_example} for an example of an aspect permutation in a single-aspect multilayer network. For single-aspect multilayer networks, there is only one nontrivial aspect permutation operator, and we call the resulting multilayer network its \emph{aspect transpose}. Multilayer networks that are vertex-aligned~\cite{Kivela2014Multilayer} (i.e., networks for which $V_M=V_0 \times \dots \times V_d$) are often represented using adjacency tensors~\cite{DeDomenico2013Mathematical,Kivela2014Multilayer,DeDomenico2015Ranking}.
In this case, aspect permutations of multilayer networks become permutations of tensors indices~\cite{Kolda2009Tensor,Comon2008Symmetric,Pan2014Tensor} in the tensor representation.
Note that aspect permutation is a meaningful operation even for undirected multilayer networks, and it is different from the transpose operator, which reverses the orientations of the edges.

\begin{figure}[!htp]
\begin{minipage}{0.7\linewidth}
\centering
$\vcenter{\hbox{\includegraphics[width=0.4\linewidth]{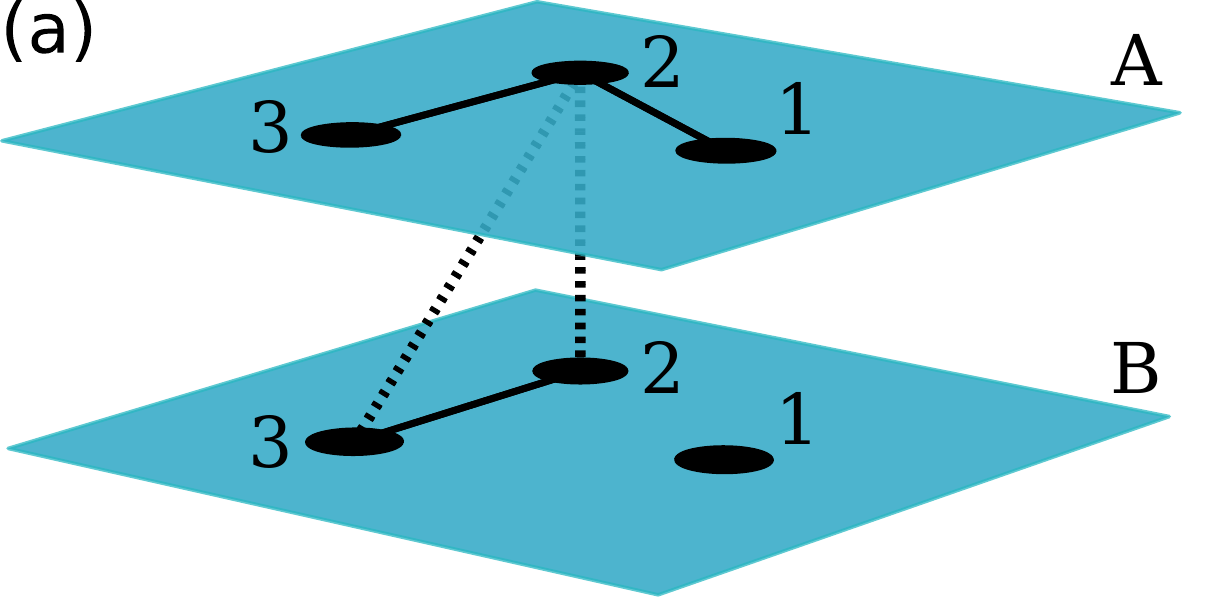}}}$
\hspace*{0.1\linewidth}
$\vcenter{\hbox{\includegraphics[width=0.4\linewidth]{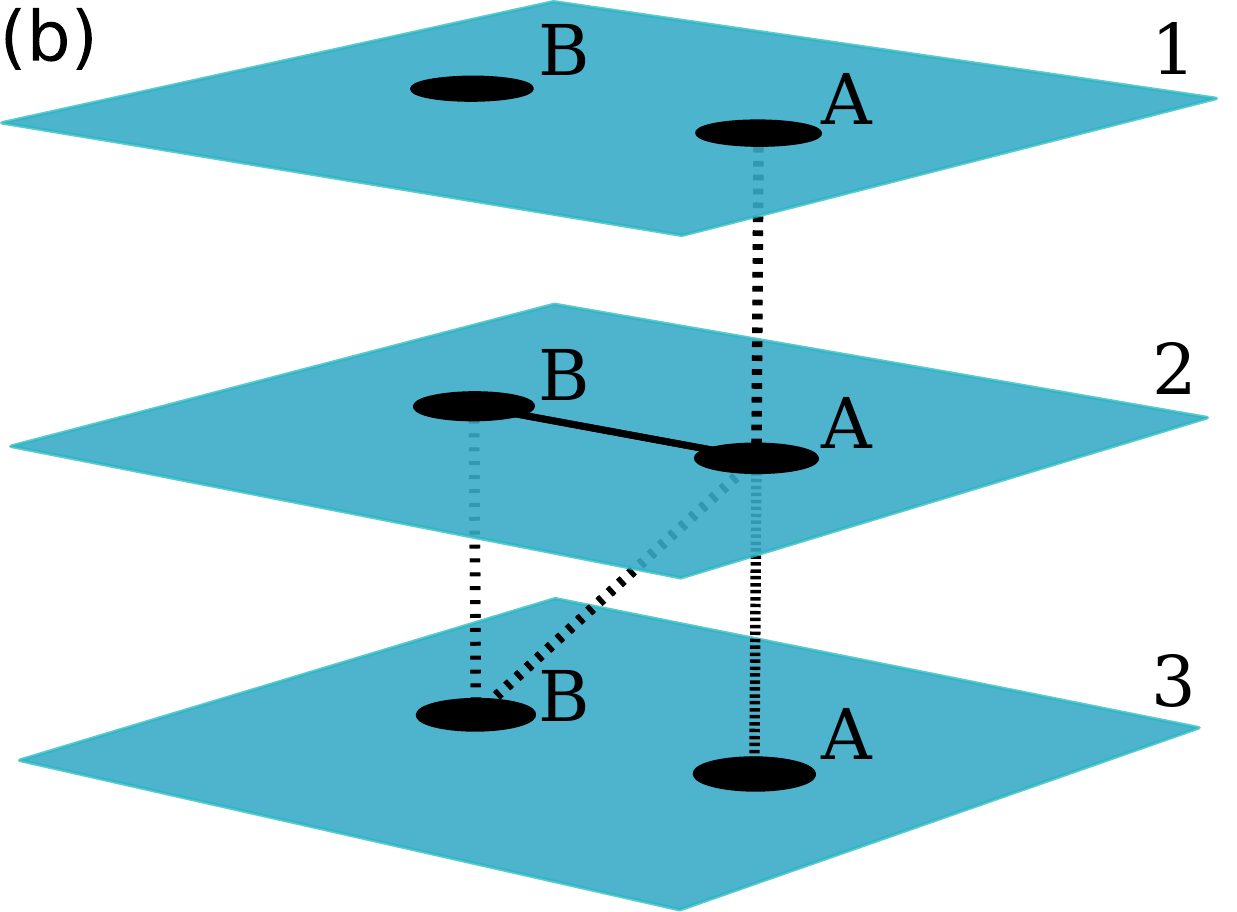}}}$
\end{minipage}
\caption{Two single-aspect multilayer networks, (a) $M_a$ and (b) $M_b$, that are aspect-transposes of each other: $M_a=A_{(0 1)}(M_b)$ and $M_b=A_{(0 1)}(M_a)$. The transposition operation preserves multilayer isomorphisms in the sense that a third multilayer network $M_c$ is vertex-isomorphic (respectively, layer-isomorphic) to $M_a$ if and only if $A_{(0 1)}(M_c)$ is layer-isomorphic (respectively, vertex-isomorphic) to $M_b$. Similarly, $M_c$ is vertex-isomorphic (respectively, layer-isomorphic) to $M_b$ if and only if $A_{(0 1)}(M_c)$ is layer-isomorphic (respectively, vertex-isomorphic) to $M_a$; and $M_c$ is vertex-layer-isomorphic to $M_b$ (respectively, $M_a$) if and only if $A_{(0 1)}(M_c)$ is vertex-layer-isomorphic to $M_a$ (respectively, $M_b$).
Illustrations produced using \protect\cite{pymnet}.
}
\label{fig:transposition_example}
\end{figure}

Aspect permutations preserve the sets of isomorphisms as long as the indices in the isomorphism permutations are also permuted accordingly. 
\begin{proposition}
The relation 
\begin{equation}\label{aspectpermrelation}
	\mathrm{Iso}_{p}(M,M^\prime) = I_{\sigma^{-1}} [\mathrm{Iso}_{p^{\sigma}}(A_\sigma(M),A_\sigma(M^\prime))]
\end{equation}	
holds, where $I_{\sigma^{-1}}({\bs\zeta})=(\zeta_{\sigma(0)}, \dots, \zeta_{\sigma(d)})$ is an operation that permutes the order of elements in a tuple according to the permutation $\sigma^{-1}$ and $p^\sigma$ is a set in which each element of $p$ is permuted according to the permutation $\sigma$.
\label{proposition:aspectpermutation}
\end{proposition}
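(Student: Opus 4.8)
The plan is to reduce everything to a single equivariance lemma describing how the aspect-permutation operator $A_\sigma$ intertwines with the componentwise action of vertex-layer maps, and then to read off the statement as a change of variables in the definition of $\mathrm{Iso}_p$. Throughout I would use the unified indexing of Section~\ref{next} (with $L_0=V$), so that a vertex-layer map is a tuple $\bs\zeta=(\zeta_0,\dots,\zeta_d)$ acting componentwise, $({\bf v}^{\bs\zeta})_a=\zeta_a(v_a)$, and $M^{\bs\zeta}$ is the network whose aspect-$a$ elementary-layer set is $L_a^{\zeta_a}$ and whose tuples are the ${\bf v}^{\bs\zeta}$. I would also write $I_\sigma$ for the operator analogous to $I_{\sigma^{-1}}$, so that $I_\sigma(\bs\zeta)_a=\zeta_{\sigma^{-1}(a)}$ and $I_\sigma$ is the two-sided inverse of $I_{\sigma^{-1}}$.

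First I would establish the key identity
\begin{equation}
 A_\sigma\!\left(M^{\bs\zeta}\right)=A_\sigma(M)^{\,I_\sigma(\bs\zeta)}\,.
\end{equation}
This is a direct component chase. On the left, applying $\bs\zeta$ and then $A_\sigma$ sends a tuple ${\bf v}\in V_M$ to the tuple whose $a$-th entry is $\zeta_{\sigma^{-1}(a)}(v_{\sigma^{-1}(a)})$; on the right, applying $A_\sigma$ and then $I_\sigma(\bs\zeta)$ sends ${\bf v}$ to the tuple whose $a$-th entry is $I_\sigma(\bs\zeta)_a(v_{\sigma^{-1}(a)})=\zeta_{\sigma^{-1}(a)}(v_{\sigma^{-1}(a)})$. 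The identical bookkeeping applies to the edge set (each edge is a pair of tuples) and to the elementary-layer sets (aspect $a$ of the left-hand side is $L_{\sigma^{-1}(a)}^{\zeta_{\sigma^{-1}(a)}}$, which is exactly aspect $a$ of the right-hand side), so the two multilayer networks coincide. The main obstacle is purely notational: one must keep $\sigma$ and $\sigma^{-1}$ straight, since $A_\sigma$ is defined using $\sigma^{-1}$ while the reindexing operator carries $\sigma$; once the indices are tracked correctly, everything is formal.

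Next I would record two easy structural facts. First, $A_\sigma$ is a bijection on multilayer networks with inverse $A_{\sigma^{-1}}$, so $N=N'\iff A_\sigma(N)=A_\sigma(N')$. Second, because $I_\tau(\bs\zeta)_a=\zeta_{\tau^{-1}(a)}$ is a nonidentity permutation precisely when $\tau^{-1}(a)\in q$, i.e.\ when $a\in q^\tau$, the operator $I_\tau$ restricts to a bijection $P_q\to P_{q^\tau}$; in particular $I_{\sigma^{-1}}$ is a bijection $P_{p^\sigma}\to P_{(p^\sigma)^{\sigma^{-1}}}=P_p$, inverse to $I_\sigma$.

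Finally I would chain these together. For any $\bs\eta\in P_{p^\sigma}$, applying $A_{\sigma^{-1}}$ and the equivariance lemma (in the form $A_{\sigma^{-1}}(A_\sigma(M)^{\bs\eta})=M^{I_{\sigma^{-1}}(\bs\eta)}$, using $A_{\sigma^{-1}}\circ A_\sigma=\mathrm{id}$) together with the first structural fact yields
\begin{equation}
 A_\sigma(M)^{\bs\eta}=A_\sigma(M')\iff M^{\,I_{\sigma^{-1}}(\bs\eta)}=M'\,.
\end{equation}
Combining this with the second structural fact, the bijection $I_{\sigma^{-1}}:P_{p^\sigma}\to P_p$ carries exactly the maps $\bs\eta$ solving $A_\sigma(M)^{\bs\eta}=A_\sigma(M')$ onto the maps $\bs\zeta\in P_p$ solving $M^{\bs\zeta}=M'$. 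This is precisely the asserted set equality $\mathrm{Iso}_{p}(M,M')=I_{\sigma^{-1}}[\mathrm{Iso}_{p^{\sigma}}(A_\sigma(M),A_\sigma(M'))]$, which completes the proof.
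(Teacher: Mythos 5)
Your proof is correct, and it is organized differently from the paper's, although the underlying computation is the same. The paper proves only the inclusion $\mathrm{Iso}_{p}(M,M^\prime) \subseteq I_{\sigma^{-1}} [\mathrm{Iso}_{p^{\sigma}}(A_\sigma(M),A_\sigma(M^\prime))]$ by the same component chase you perform, verifies group membership, and then obtains the reverse inclusion by a bootstrap: it reapplies the proven inclusion with $\sigma^{-1}$, $p^\sigma$, $A_\sigma(M)$, $A_\sigma(M^\prime)$ in place of $\sigma$, $p$, $M$, $M^\prime$ and composes the two applications. You instead isolate the unconditional intertwining identity $A_\sigma(M^{\bs\zeta})=A_\sigma(M)^{I_\sigma(\bs\zeta)}$, record that $A_\sigma$ and $I_\sigma$ are bijections with inverses $A_{\sigma^{-1}}$ and $I_{\sigma^{-1}}$, and read off the set equality in one pass as a change of variables; this avoids the double inclusion entirely and makes the structure (the aspect permutation conjugates one action into the other) more transparent, at the cost of stating two auxiliary facts the paper leaves implicit in its bootstrap. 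One point you must make explicit, and which the paper is careful about: the group $P_{p^\sigma}$ on the right-hand side is defined relative to the layer sets of the aspect-permuted networks, since aspect $a$ of $A_\sigma(M)$ has elementary-layer set $L_{\sigma^{-1}(a)}$. The paper writes this dependence as $P_{p^{\sigma}}(\{ L_{\sigma^{-1}(a)} \}_0^d)$ and proves $I_\sigma\left(P_p(\{ L_a \}_0^d)\right)=P_{p^{\sigma}}(\{ L_{\sigma^{-1}(a)} \}_0^d)$. Your claim that ``$I_\tau$ restricts to a bijection $P_q\to P_{q^\tau}$'' is correct only under this reading: as written, component $a$ of $I_\tau(\bs\zeta)$ is a permutation of $L_{\tau^{-1}(a)}$, so it cannot lie in a group whose $a$th factor acts on $L_a$. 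Relatedly, ``is a nonidentity permutation precisely when'' should be ``is allowed to be a nonidentity permutation precisely when.'' With those clarifications your argument is complete.
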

For a proof, see Section \ref{sec:proofs}.


\section{Solving Multilayer Isomorphism Problems} \label{sec:computation}

To take full advantage of the theory of isomorphisms in multilayer networks, one needs efficient computational methods for finding isomorphisms between a pair of multilayer networks. 
One can proceed on a case-by-case basis for various types of networks, such as temporal networks~\cite{Kovanen2011Temporal}, using standard techniques from the graph-isomorphism literature~\cite{Mckay2014Practical}. We will now use the same techniques to show how to reduce all of the multilayer-network isomorphism problems to vertex-colored-graph isomorphism problems. 
This reduction allows one to solve any kind of isomorphism problem for any type of multilayer network without the need to come up with and prove the correctness of a new reduction technique.

In the reductions that we define, the size of the vertex-colored-graph isomorphism problem is a linear function of the size of the multilayer-network isomorphism problem and thus yields practical ways of solving multilayer-network isomorphism problems. 
We also use these reductions to show that solving multilayer-network isomorphism problems is in the same complexity class as ordinary graph isomorphism problems. This is unsurprising, as many generalized graph isomorphism problems are known to be equivalent~\cite{Zemlyachenko1985Graph}, including ones that involve the very general relational structures defined in Ref.~\cite{Miller1977Graph}. Another valid approach for our argument would be to reduce a multilayer-network isomorphism problem to other structures (e.g., to a $k$-uniform hypergraph~\cite{Codenotti2011Testing}), but the reduction to a vertex-colored graph yields practical benefits in terms of the ability to directly use software that is designed to solve isomorphism problems.


\subsection{Isomorphisms in Vertex-Colored Graphs} \label{sec:vertexcoloredisom}

A \emph{vertex-colored graph} $G_c=(V_c,E_c,\pi,C)$ is an extension of a graph $(V_c,E_c)$ with a surjective map $\pi: V \to C$ that assigns a color to each vertex. We define a vertex map as a bijective map $\gamma: V_c \to V'_c$ and introduce the following notation: $V^\gamma_c = \{ \gamma(v) \mid v \in V_c  \} $, $E^\gamma_c = \{ (\gamma(v),\gamma(u)) \mid (v,u) \in E_c\}$, $\pi^\gamma(v) = \pi(\gamma^{-1}(v))$, and $G^\gamma_c =(V^\gamma_c,E^\gamma_c,\pi^\gamma,C)$. Two vertex-colored graphs, $G_c$ and $G_c^\prime$, are \emph{isomorphic} if there is a vertex map $\gamma$ such that $G_c^\gamma = G_c^\prime$, and we then write $G_c \cong G_c^\prime$.

For the purposes of isomorphisms, we can---without loss of generality---limit our attention to graphs with the vertex set $V=\{1, \dots , n \}$, where $n$ is the number of vertices in the graph. This allows us to phrase the graph isomorphism problem in terms of permutations (similar to Section~\ref{next}). The bijective map $\gamma$ in the definition of a graph isomorphism is again a permutation that acts on the set $V$ of vertices, and the permutations form the symmetric group $S_V$.

The vertex-colored-graph isomorphism problem is a well-studied computational problem, and several algorithms and accompanying software packages are available for solving it~\cite{Miller1977Graph,Zemlyachenko1985Graph,Mckay2014Practical}.


\subsection{The Reduction}

\begin{figure*}[!htp]
\centering
\includegraphics[width=0.8\linewidth]{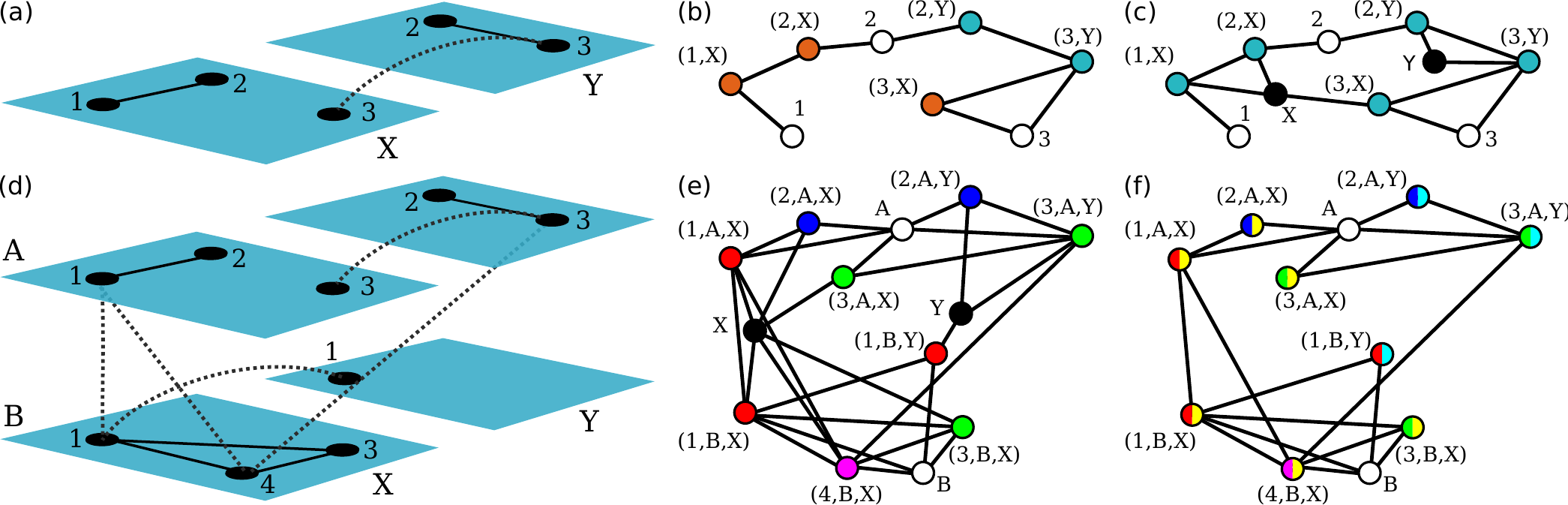}
\caption{Example of a function $f_p$ that maps multilayer networks to vertex-colored graphs. (a) A multilayer network $M_1$ with a single aspect, two layers, and three vertices. (b) The vertex-colored graph $f_{\{0\}}(M_1)$. One can use the mapping $f_{\{0\}}$ to find vertex isomorphisms in the multilayer network $M_1$. In other words, permutations of vertex labels are allowed, but permutations of layer labels are not allowed. (c) The vertex-colored network $f_{\{0,1\}}(M_1)$. One can use the mapping $f_{\{0,1\}}$ to find vertex-layer isomorphisms in the multilayer network $M_1$. In other words, both vertex labels and layer labels are both allowed to be permuted. (d) A multilayer network $M_2$ with two aspects, two layers in each aspect, and four vertices. (e) The vertex-colored graph $f_{\{1,2\}}(M_2)$. One can use the mapping $f_{\{1,2\}}$ to find layer isomorphisms in the multilayer network $M_2$. 
In other words, permutations of layer labels are allowed in each aspect, but permutations of vertex labels are not allowed. (f) The vertex-colored graph $f_{\{1\}}(M_2)$. One can use the mapping $f_{\{1\}}$ to find partial layer isomorphisms in the multilayer network $M_2$. In other words, permutations of layer labels are allowed only in the first aspect, and permutations of vertex labels or layer labels are not allowed in the second aspect.
}
\label{fig:vertexisom}
\end{figure*}

The idea behind our reduction of multilayer-network isomorphism problems to the isomorphism problem in vertex-colored graphs is that we define an injective function $f_p$ such that two multilayer networks $M$ and $M^\prime$ are isomorphic with a permutation from $P_p$ if and only if $f_p(M)$ and $f_p(M^\prime)$ are isomorphic vertex-colored graphs. In this reduction, it is useful to consider the concept of an \emph{underlying graph} $G_M=(V_M,E_M)$ of a multilayer network \cite{Kivela2014Multilayer}.  For two multilayer networks to be isomorphic, their underlying graphs need to be isomorphic. However, this is not a sufficient condition, because it allows (1) permutations in aspects that are not included in $p$ and (2) permutations that occur in each layer independently of permutations that occur in other layers. Consider, for example, the multilayer network $M_a$ in Fig.~\ref{fig:isomex} and the network $M_a^\prime$ that one obtains by swapping vertex labels $2$ and $3$ in layer $X$ but not in layer $Y$. The underlying graphs $G_{M_a}$ and $G_{M_a^\prime}$ are then isomorphic even though there is no vertex-layer isomorphism between the two associated multilayer networks.

We address the first issue above by coloring the vertices in the underlying graph so that its vertices, which correspond to vertex-layer tuples in the associated multilayer network, that are not allowed to be swapped are assigned different colors from ones that can be swapped. For example, for a vertex isomorphism in a single-aspect multilayer network, we color the vertices of the underlying graph according to the identity of their layers (i.e., by using a different color for each layer). We address the second issue above by gluing together vertex-layer tuples that share a vertex or an elementary layer by using auxiliary vertices. For example, for a vertex isomorphism in a single-aspect multilayer network, we add an auxiliary vertex in the underlying graph for each vertex $v \in V$ in the multilayer network, and we connect the auxiliary vertex to vertices in the underlying graph that correspond to $v$.  This restricts the possible permutations: for each layer, one needs to permute the vertex labels in the same way. 
See Fig.~\ref{fig:vertexisom} for an example of our reduction procedure.

We define the reduction function $f_p$ for general $M$ and $p$ as follows.

\begin{definition}
We construct the reduction from multilayer networks to vertex-colored graphs $f_p:\mathcal{M} \to \mathcal{G}_c$ such that $f_p\left((V_M,E_M,V,\bs{L})\right)=(V_G,E_G,C,\pi)$ using
\begin{enumerate}
\item[(1)]$V_G=V_M \cup V_0$, where the auxiliary vertex set $V_0= \bigcup_{a \in p} L_a$\,; 
\item[(2)]$E_G= E_M \cup E_0$, where $E_0=\{ (v_a,{\bf v}) \mid {\bf v} \in V_M, a \in p \}$\,; 
\item[(3)]$C=p \cup L_{\overline{p}_1} \times \dots \times L_{\overline{p}_m}$\,;
\item[(4)]$\pi(v_g)=a$ if $v_g \in L_a$ and  $\pi(v_g)=(v_{\overline{p}_1} \dots v_{\overline{p}_m})$ if $v_g={\bf v} \in V_M$\,.
\end{enumerate}
\label{definition:fp}
\end{definition}

In addition to the reduction function $f_p$ that we need to solve the decision problem of two multilayer networks being isomorphic, we would like to be able to explicitly construct the permutations that we need to map a multilayer network to an isomorphic multilayer network. That is, we need a mapping between the permutations in multilayer networks and permutations in vertex-colored graphs. 
We define this map as follows.
\begin{definition}
Given a multilayer network $M$, we define the function $g_p$ from the permutations $P_p$ to permutations of vertex-colored graphs so that $v_g^{g_p(\bs\zeta)}=v_g^{\bs\zeta}$ if $v_g \in V_M$ and $v_g^{g_p(\bs\zeta)}=v_g^{\zeta_a}$ if $v_g \in L_a$ for any $\bs\zeta \in P_p$.
\end{definition}

The following theorem allows us to use $f_p$ and $g_p$ for the purpose of solving multilayer network isomorphism problems using an oracle for vertex-colored graph isomorphism.
\begin{theorem} \label{theorem:mlayerisom}
$\mathrm{Iso}_p(M,M^\prime)=g^{-1}_p[\mathrm{Iso}(f_p(M),f_p(M^\prime))]$
\end{theorem}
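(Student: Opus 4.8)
The plan is to prove the stated set equality elementwise. Since by definition
\[
g_p^{-1}[\mathrm{Iso}(f_p(M),f_p(M^\prime))] = \{ \bs\zeta \in P_p : g_p(\bs\zeta) \in \mathrm{Iso}(f_p(M),f_p(M^\prime)) \}\,,
\]
and $\mathrm{Iso}_p(M,M^\prime)=\{\bs\zeta\in P_p : M^{\bs\zeta}=M^\prime\}$, it suffices to establish, for each fixed $\bs\zeta\in P_p$, the biconditional $M^{\bs\zeta}=M^\prime \Leftrightarrow g_p(\bs\zeta)\in\mathrm{Iso}(f_p(M),f_p(M^\prime))$. Writing $\phi=g_p(\bs\zeta)$, I would first record the explicit action of $\phi$ on the vertices $V_G=V_M\cup V_0$ of $f_p(M)$: namely $\phi(\mathbf{v})=\mathbf{v}^{\bs\zeta}$ on tuple-vertices $\mathbf{v}\in V_M$ and $\phi(v_g)=\zeta_a(v_g)$ on auxiliary vertices $v_g\in L_a$ with $a\in p$. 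Because all networks in $\mathcal{M}$ share $V=L_0$ and the sets $L_a$, both the auxiliary set $V_0=\bigcup_{a\in p}L_a$ and the color set $C$ coincide for $f_p(M)$ and $f_p(M^\prime)$; likewise $M^{\bs\zeta}=M^\prime$ collapses to the two conditions $V_M^{\bs\zeta}=V_{M^\prime}$ and $E_M^{\bs\zeta}=E_{M^\prime}$, since a permutation of $V$ and of each $L_a$ leaves the $V$ and $\mathbf{L}$ components of $M^{\bs\zeta}$ unchanged.

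I would then check the three requirements for $\phi$ to be a vertex-colored-graph isomorphism $f_p(M)\to f_p(M^\prime)$ and match them to these two conditions. Color preservation comes for free on the restricted group $P_p$: by definition $\zeta_a$ is the identity for $a\in\overline{p}$, so $\phi$ fixes the $\overline{p}$-coordinates of every tuple and hence leaves the tuple-colors $\pi(\mathbf{v})=(v_{\overline{p}_1}\dots v_{\overline{p}_m})$ invariant, while each auxiliary vertex of $L_a$ is sent by $\zeta_a$ to another element of $L_a$ and so keeps its color $a$. For bijectivity, $\phi$ restricts to the bijection $\zeta_a$ on each $L_a$ and thus permutes $V_0$ onto itself, so $\phi$ is a bijection $V_G\to V_{G^\prime}$ precisely when it carries the tuple-vertices bijectively onto tuple-vertices, i.e. exactly when $V_M^{\bs\zeta}=V_{M^\prime}$.

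For edge preservation I would compute $E_G^\phi$ directly from the split $E_G=E_M\cup E_0$. The genuinely nontrivial point is the gluing edges: an edge $(v_a,\mathbf{v})\in E_0$ is sent to $(\zeta_a(v_a),\mathbf{v}^{\bs\zeta})=((\mathbf{v}^{\bs\zeta})_a,\mathbf{v}^{\bs\zeta})$, because the $a$-coordinate of $\mathbf{v}^{\bs\zeta}$ is exactly $\zeta_a(v_a)$, and this is an element of $E_0^\prime$ as soon as $\mathbf{v}^{\bs\zeta}\in V_{M^\prime}$. Hence, under the condition $V_M^{\bs\zeta}=V_{M^\prime}$, one gets $E_0^\phi=E_0^\prime$ automatically. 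This self-consistency of the auxiliary edges is precisely what the coordinated definition of $g_p$ is engineered to deliver, and verifying it is the step I expect to be the main (if modest) obstacle, since it is where the construction actually couples the per-layer permutations into a single global one. Using the distinctness assumption of the footnote in Section~\ref{next}, tuple-tuple edges and auxiliary-tuple edges have disjoint endpoint types, so $E_G^\phi=E_M^{\bs\zeta}\cup E_0^\prime$ equals $E_{G^\prime}=E_{M^\prime}\cup E_0^\prime$ if and only if $E_M^{\bs\zeta}=E_{M^\prime}$.

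Combining the three checks, $\phi=g_p(\bs\zeta)$ is a vertex-colored-graph isomorphism $f_p(M)\to f_p(M^\prime)$ if and only if $V_M^{\bs\zeta}=V_{M^\prime}$ and $E_M^{\bs\zeta}=E_{M^\prime}$, which holds if and only if $M^{\bs\zeta}=M^\prime$; this is the required biconditional, and the theorem follows. I would close with a brief remark that $g_p$ is injective, since its action on the tuple-vertices already determines $\bs\zeta$; injectivity is not needed for the preimage equality itself, but it is what lets one recover each multilayer isomorphism uniquely from the corresponding graph isomorphism in practice.
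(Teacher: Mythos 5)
Your verification that $g_p(\bs\zeta)$ is a vertex-colored-graph isomorphism exactly when $M^{\bs\zeta}=M^\prime$ is sound: the color, bijectivity, and edge checks you perform amount to the paper's equivariance condition $f_p(M^{\bs\zeta})=f_p(M)^{g_p(\bs\zeta)}$ (condition (3) of Lemma~\ref{lemma:mapping}), combined with injectivity of $f_p$. But there is a genuine gap: by reading $g_p^{-1}[\cdot]$ as a set-theoretic preimage and proving the biconditional only for permutations of the form $g_p(\bs\zeta)$, you never rule out that $\mathrm{Iso}(f_p(M),f_p(M^\prime))$ contains an isomorphism $\gamma$ that is \emph{not} in the image of $g_p$. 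The paper devotes the most substantial part of its proof to exactly this point (condition (2) of Lemma~\ref{lemma:mapping}): given an arbitrary $\gamma$ with $f_p(M)^\gamma=f_p(M^\prime)$, the colors force $\gamma$ to map each auxiliary set $L_a$ (for $a\in p$) into itself, one defines $\zeta_a$ as the restriction of $\gamma$ to $L_a$, and then the gluing edges $E_0$ force $\gamma({\bf v})={\bf v}^{\bs\zeta}$ coordinatewise on the tuple vertices; the argument runs by contradiction, using $(v_a,{\bf v})^\gamma \in E_0^\prime$ together with $L_a \cap L_b = \emptyset$. This is precisely where the auxiliary-vertex gadget earns its keep --- it couples the per-layer actions of an \emph{arbitrary} graph isomorphism into a single global permutation --- and it is absent from your proposal; your remark that the gluing edges are ``self-consistent'' only checks them in the easy direction, for maps already known to come from $P_p$.

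The omission is not cosmetic. Under your preimage reading, the equality you prove is strictly weaker than what the theorem is needed for: the soundness direction of the reduction, $f_p(M)\cong f_p(M^\prime)\implies M\cong_p M^\prime$ in claim (1) of Corollary~\ref{corollary:misom1}, requires that some colored-graph isomorphism pulls back to a multilayer permutation. If a ``rogue'' $\gamma \notin g_p(P_p)$ were the only isomorphism between $f_p(M)$ and $f_p(M^\prime)$, then your preimage would be empty while the colored graphs are isomorphic, and the oracle-based reduction would return a wrong answer; nothing in your argument excludes this scenario. Moreover, the paper's own usage applies $g_p^{-1}$ to individual elements of $\mathrm{Iso}(f_p(M),f_p(M^\prime))$, which presupposes that every such element lies in the range of $g_p$. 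To complete the proof you must add the surjectivity argument sketched above.
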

For a proof see Section \ref{sec:proof_reduction}.

From Theorem \ref{theorem:mlayerisom}, it follows that one can also solve multilayer network isomorphism problems using the reduction to vertex-colored graphs that we have introduced. For example, one can use this reduction to determine if two multilayer networks are isomorphic, to define complete invariants for isomorphisms, and to calculate automorphism groups. We summarize these uses of Theorem \ref{theorem:mlayerisom} in the following corollary.
\begin{corollary}
The following statements are true for all multilayer networks $M,M^\prime \in \mathcal{M}$ and nonempty $p$:
\begin{enumerate}
\item[(1)] $M \cong_p M' \iff f_p(M) \cong f_p(M')$;
\item[(2)] $C_G(f_p(M))$ is complete invariant for $\cong_p$ if $C_G$ is complete invariant for $\cong$;
\item[(3)] $\mathrm{Aut}_p(M)=g^{-1}_p(\mathrm{Aut}(f_p(M)))$.
\end{enumerate}
\label{corollary:misom1}
\end{corollary}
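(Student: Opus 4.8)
The plan is to derive all three claims directly from Theorem~\ref{theorem:mlayerisom}, which already supplies the set-level identity $\mathrm{Iso}_p(M,M^\prime)=g_p^{-1}[\mathrm{Iso}(f_p(M),f_p(M^\prime))]$ relating multilayer isomorphism sets to vertex-colored-graph isomorphism sets. Each part of the corollary is essentially a repackaging of this identity, so the work lies in tracking definitions rather than in constructing anything new.

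For claim (1), I would rewrite $M\cong_p M^\prime$ as the assertion $\mathrm{Iso}_p(M,M^\prime)\neq\emptyset$ and $f_p(M)\cong f_p(M^\prime)$ as $\mathrm{Iso}(f_p(M),f_p(M^\prime))\neq\emptyset$, so that the biconditional becomes a statement about emptiness of the two sides of the theorem's identity. The forward implication is immediate: any $\bs\zeta\in\mathrm{Iso}_p(M,M^\prime)$ maps under $g_p$ to an element $g_p(\bs\zeta)\in\mathrm{Iso}(f_p(M),f_p(M^\prime))$ by the theorem, so the latter set is nonempty. The reverse implication is where the only genuine subtlety lies: from $\mathrm{Iso}(f_p(M),f_p(M^\prime))\neq\emptyset$ I must produce a preimage under $g_p$, which requires that at least one graph isomorphism between $f_p(M)$ and $f_p(M^\prime)$ actually lies in the image of $g_p$. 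This is exactly the content established while proving Theorem~\ref{theorem:mlayerisom}---namely that every color- and auxiliary-structure-preserving graph isomorphism is of the form $g_p(\bs\zeta)$---so $g_p^{-1}$ of a nonempty isomorphism set is again nonempty, and emptiness transfers in both directions. I expect this ``surjectivity onto the relevant isomorphisms'' to be the main obstacle, but it is inherited from the theorem rather than requiring any fresh argument here.

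For claim (2), I would chain equivalences. By the definition of a complete invariant for ordinary graph isomorphism, $C_G(f_p(M))=C_G(f_p(M^\prime))$ holds if and only if $f_p(M)\cong f_p(M^\prime)$; and by claim (1) the latter holds if and only if $M\cong_p M^\prime$. Composing these two biconditionals shows that $C_G\circ f_p$ separates multilayer networks exactly along $\cong_p$-classes, which is precisely the statement that it is a complete invariant for $\cong_p$.

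Claim (3) is the most direct: it is simply Theorem~\ref{theorem:mlayerisom} specialized to $M^\prime=M$. Writing $\mathrm{Aut}_p(M)=\mathrm{Iso}_p(M,M)$ and $\mathrm{Aut}(f_p(M))=\mathrm{Iso}(f_p(M),f_p(M))$ and substituting into the theorem yields $\mathrm{Aut}_p(M)=g_p^{-1}[\mathrm{Aut}(f_p(M))]$ verbatim. If one wishes to record in addition that this is a group isomorphism onto its image, I would note that $g_p$ is a homomorphism on $P_p$ (compositions of multilayer permutations map to compositions of the induced graph permutations) and is injective; but the stated set equality needs nothing beyond the specialized theorem.
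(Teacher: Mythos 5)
Your proposal is correct and follows essentially the same route as the paper: claim (1) via the nonemptiness of the isomorphism sets in Theorem~\ref{theorem:mlayerisom}, claim (2) by chaining the invariance biconditional with (1), and claim (3) by setting $M^\prime=M$ in the theorem. Your explicit remark that the reverse direction of (1) needs every graph isomorphism between $f_p(M)$ and $f_p(M^\prime)$ to lie in the image of $g_p$ (condition (2) of Lemma~\ref{lemma:mapping}, established inside the theorem's proof) is a point the paper's own proof passes over silently, so if anything you are slightly more careful.
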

For a proof, see Section \ref{sec:proof_corollaries}.

We now define the ``multilayer network isomorphism decision problem'' and show that it is in the same complexity class with the graph isomorphism problem if one problem is allowed to be reduced to the other in polynomial time. 

\begin{definition}
The multilayer network isomorphism problem ($MGI_p$) gives a solution to the following decision problem: Given two multilayer networks $M,M^\prime \in \mathcal{M}$, is $M \cong_p M^\prime$ true? 
\end{definition}

The complexity class in which problems can be reduced to the graph isomorphism problem is denote here  $\mathit{GI}$ and many graph-related problems such as vertex-colored graph isomorphism problem and hypergraph isomorphism problem are known to be  $\mathit{GI}$-complete \cite{Zemlyachenko1985Graph}.

\begin{corollary}
$\mathit{MGI}_p$ is $\mathit{GI}$-complete for all nonempty $p$.
\label{corollary:complexity}
\end{corollary}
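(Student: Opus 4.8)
The plan is to prove $\mathit{GI}$-completeness in the two standard directions: first show membership, $\mathit{MGI}_p \in \mathit{GI}$, by reducing $\mathit{MGI}_p$ to ordinary graph isomorphism; and then show $\mathit{GI}$-hardness by reducing ordinary graph isomorphism to $\mathit{MGI}_p$. Both reductions must be verified to run in polynomial time, which for the membership direction is essentially the linear size bound that motivated our construction of $f_p$.

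For membership, I would invoke the reduction function $f_p$ of Definition~\ref{definition:fp} together with Corollary~\ref{corollary:misom1}(1), which states that $M \cong_p M^\prime \iff f_p(M) \cong f_p(M^\prime)$. Deciding $\mathit{MGI}_p$ therefore reduces to deciding isomorphism of the vertex-colored graphs $f_p(M)$ and $f_p(M^\prime)$, and the vertex-colored-graph isomorphism problem is already known to be $\mathit{GI}$-complete and hence lies in $\mathit{GI}$. The remaining obligation is to check that $f_p$ is a genuine polynomial-time reduction: from Definition~\ref{definition:fp}, the vertex set $V_G=V_M \cup \bigcup_{a\in p} L_a$ has size $|V_M| + \sum_{a\in p}|L_a|$ and the edge set $E_G$ has size $|E_M| + |p|\,|V_M|$, so (for the fixed $d$ implicit in a fixed $p$) the size of $f_p(M)$ grows linearly in the size of $M$, and both $f_p(M)$ and its coloring $\pi$ are clearly computable in polynomial time. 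Composing this with the poly-time reduction of vertex-colored graph isomorphism to $\cong$ keeps $\mathit{MGI}_p$ inside $\mathit{GI}$.

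For $\mathit{GI}$-hardness I would encode an ordinary graph-isomorphism instance directly as an $\mathit{MGI}_p$ instance. Given graphs $G=(V_G,E_G)$ and $G^\prime$ on $n$ vertices, fix any aspect $a\in p$ (possible since $p$ is nonempty) and build multilayer networks in which the vertices of $G$ become the elementary layers of aspect $a$: set $L_a=V_G$ (with $L_0=V$ when $a=0$), make every other set a singleton, place one vertex-layer tuple for each element of $L_a$, and put an edge between the tuples indexed by $i$ and $j$ exactly when $(i,j)\in E_G$; do likewise for $G^\prime$. Because all factors other than $L_a$ are singletons, every permutation in $P_p$ restricts to the identity on those factors, so a $p$-isomorphism is precisely a relabeling of $L_a$, and such a relabeling carries $M$ to $M^\prime$ if and only if the corresponding permutation is a graph isomorphism $G\to G^\prime$. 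This construction is linear in $n+|E_G|$ and computable in polynomial time, so $\mathit{GI}$ reduces to $\mathit{MGI}_p$. When $0\notin p$ one may alternatively first apply the aspect permutation swapping aspect $a$ with aspect $0$ and invoke Proposition~\ref{proposition:aspectpermutation} to reduce to the case $0\in p$.

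The hard part will be the hardness direction, and specifically making the correspondence between $p$-isomorphisms and graph isomorphisms \emph{exact} rather than one-directional. The delicate point is that $P_p$ permits independent permutations on \emph{every} aspect in $p$, not only on aspect $a$; forcing all auxiliary sets to be singletons is what neutralizes these extra degrees of freedom and prevents spurious $p$-isomorphisms (for instance from the vertex aspect when $0\in p$, or from other permitted layer aspects). Confirming that no such spurious maps survive is the one place the argument genuinely needs care, but it follows immediately once every set other than $L_a$ is a singleton.
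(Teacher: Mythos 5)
Your proposal is correct and follows essentially the same route as the paper: membership via the reduction $f_p$ and Corollary~\ref{corollary:misom1}(1) with the same linear size bounds $|V_M|+\sum_{a\in p}|L_a|$ and $|E_M|+|p|\,|V_M|$, and hardness by choosing an aspect $a\in p$, making the graph's vertices the elementary layers of aspect $a$, and making all other aspects singletons so that $p$-isomorphisms collapse to permutations of $L_a$. Your explicit remark that the singleton factors neutralize the extra permutation freedom in $P_p$ is the (unstated) reason the paper's construction works, so nothing is missing.
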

For a proof, see Section \ref{sec:proof_corollaries}.

We do the reduction from multilayer networks to vertex-colored graphs using the $f_p$ function that we defined earlier. We only need to show that this reduction is indeed linear (and thus also polynomial) in time. 
The reduction of graph isomorphism problems to multilayer network isomorphism problems is trivial if we allow the vertex labels to be permuted, because we can simply map the graph to a multilayer network with a single layer. If we cannot permute the vertex labels---i.e., if $0 \notin p$---then we need to construct a multilayer network in which each vertex of the graph becomes a layer with only a single vertex and we then connect these layers according to the graph adjacencies.


\section{Isomorphisms Induced for Other Types of Networks} \label{sec:examples}

In this section, we illustrate the use of multilayer network isomorphisms in network representations that can be mapped into the multilayer-network framework. As example, we use the three most common types of multilayer networks \cite{Kivela2014Multilayer}: multiplex networks, vertex-colored networks, and temporal networks. In Section~\ref{sec:multiplex}, we discuss isomorphisms in multiplex networks. We focus on counting the number of nonisomorphic multiplex networks of a given size (i.e., with a given number of vertices). In Section~\ref{sec:vertexcolored}, we discuss isomorphisms in vertex-colored networks. In Section~\ref{sec:temporal}, we illustrate how multilayer network isomorphisms give a natural definition of the isomorphisms that are defined implicitly for temporal networks when analyzing motifs in them~\cite{Kovanen2011Temporal}.


\subsection{Multiplex Networks} \label{sec:multiplex}

\begin{figure*}[!htp]
\includegraphics[width=0.45\linewidth]{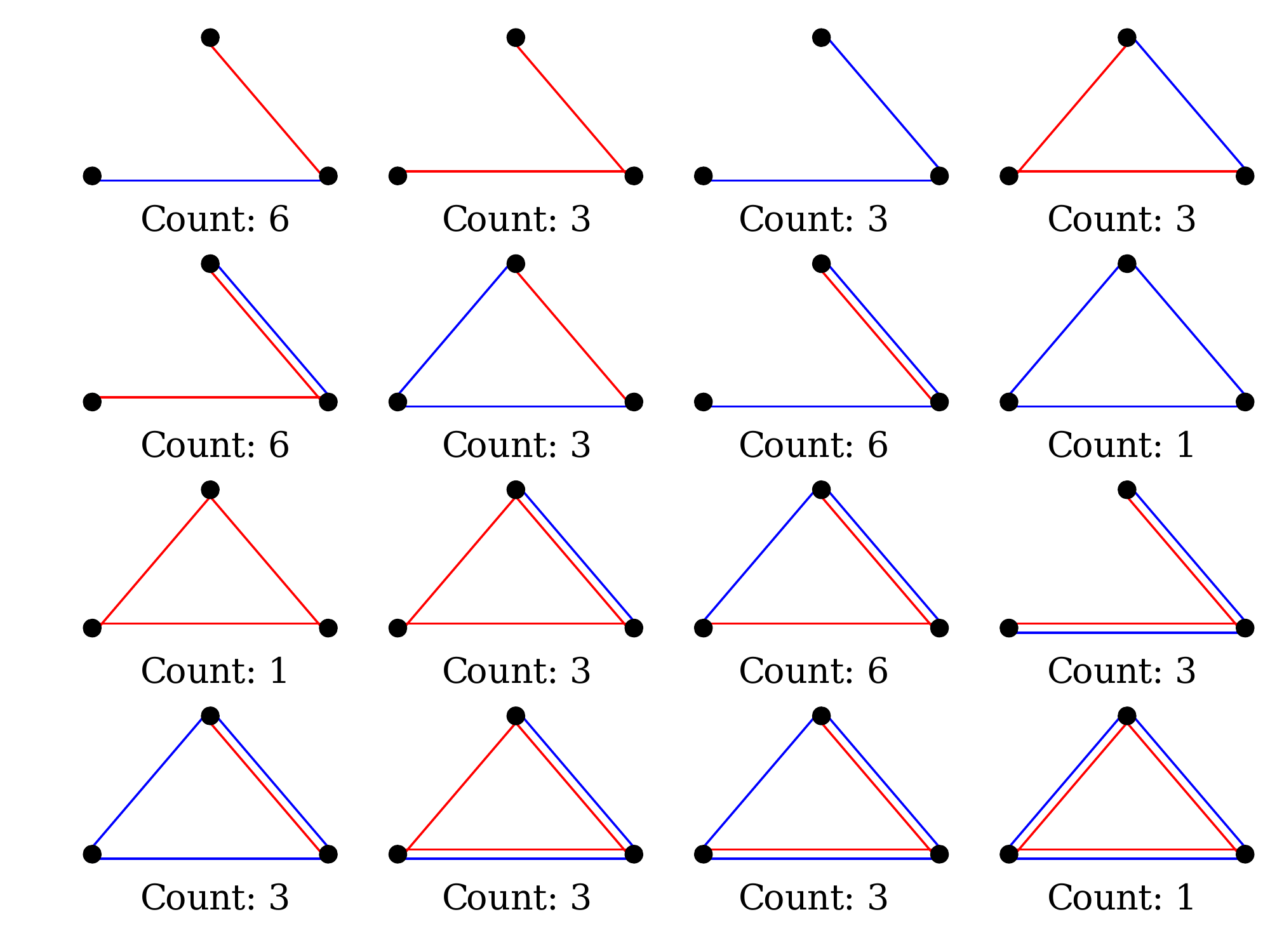}
\hspace{.2 in}
\includegraphics[width=0.45\linewidth]{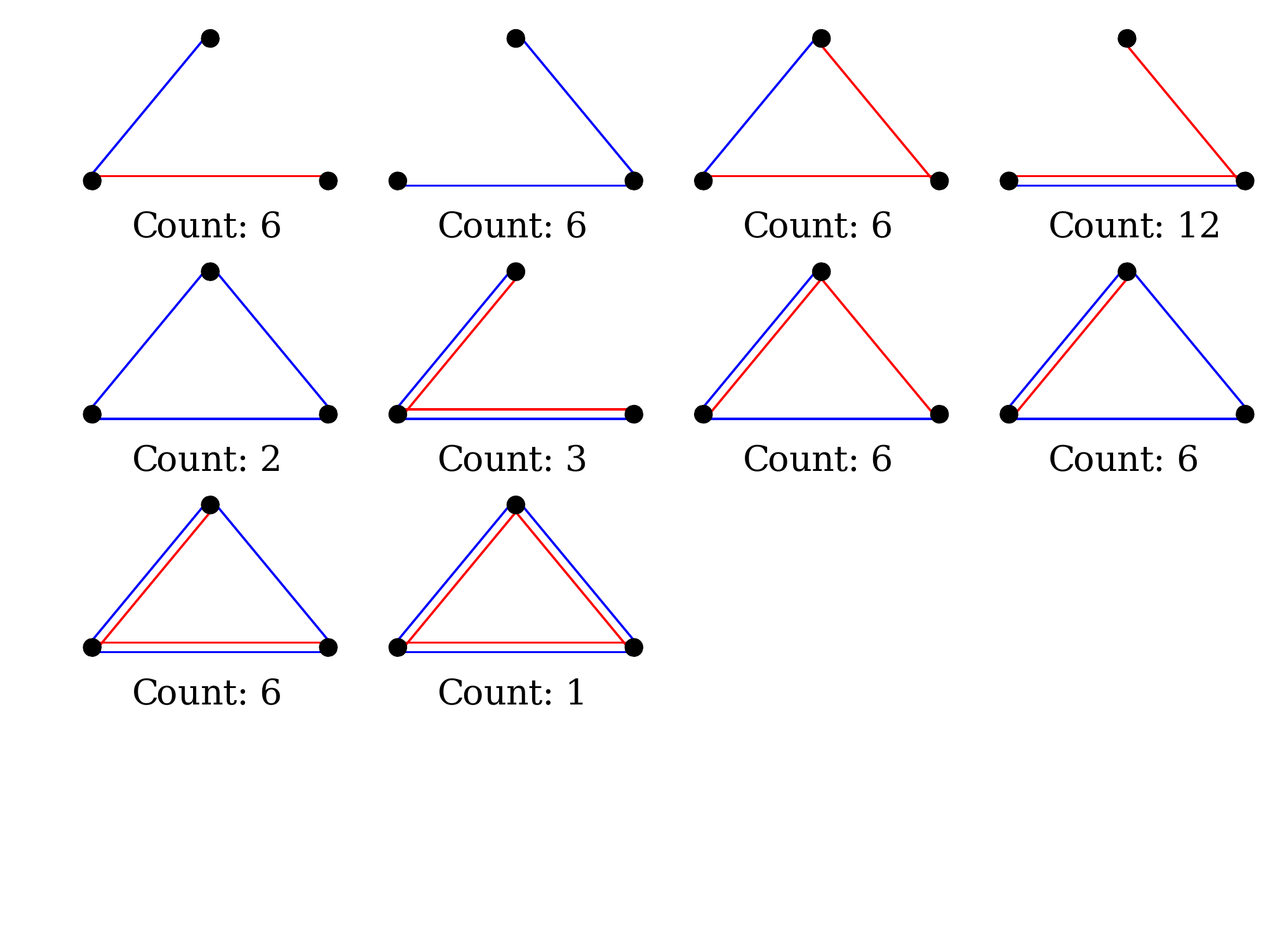}
\caption{Isomorphism classes for multiplex networks with 3 vertices and 2 layers. We only include connected networks. We show vertex isomorphisms in the left panel and vertex-layer isomorphisms in the right panel. The count is the number of networks (with a fixed set of vertices and layers) that are mapped to each class.}
\label{fig:mplex_3_2}
\end{figure*}

Multiplex networks have thus far been the most popular type of multilayer networks for analyzing empirical network data~\cite{Kivela2014Multilayer,Boccaletti2014}. One can represent systems that have several different types of interactions between its vertices as multiplex networks that are defined as a sequence of graphs $\{ G_\alpha \}_\alpha = \{ V_\alpha,E_\alpha \}_\alpha $. It is almost always assumed that the set of vertices is the same in all of the layers $V_\alpha=V_\beta$ for all $\alpha,\beta$ (although this is not a requirement), and multiplex networks that satisfy this condition are said to be ``vertex-aligned''~\cite{Kivela2014Multilayer}.

One can map multiplex networks to multilayer networks with a single aspect by considering each of the graphs $G_\alpha$ as an intra-layer network (i.e., a network in which the edges are placed inside of a single layer~\cite{Kivela2014Multilayer}). Optionally, one can add inter-layer edges (i.e., edges in which the two vertices are in different layers) by linking each vertex to its replicates in other layers. This is known as \emph{categorical coupling}. Either using categorical coupling or leaving out all of the inter-layer edges leads to same isomorphism relations for multiplex networks. However, for \emph{ordinal coupling}, in which only vertices in consecutive layers are adjacent to each other, the isomorphism classes can be different (see Section~\ref{sec:temporal}). A vertex isomorphism in multiplex networks allows the vertex labels to be permuted, but the types of edges are preserved. The layer isomorphism allows the types of edges to be permuted but only in a way that all of the edges of a particular type are mapped to a single other type.

Analyzing small substructures using clustering coefficients in social networks and other multiplex networks have recently gained attention~\cite{Barrett2012Taking,Brodka2010Method,Brodka2012Analysis,Criado2011Mathematical,Battiston2014Metrics,Cozzo2013Clustering}. Such structures have important (and fascinating) new features that go beyond clustering coefficients in ordinary graphs. Instead of there being only one type of triangle, there is very large number of different types of multiplex triangles and connected triplets of vertices. Such triadic structures have not been fully explored, though we discuss them in some detail in a recent paper \cite{Cozzo2013Clustering}. Moreover, one can study larger subgraphs and induced subgraphs of multiplex networks by extending the analysis of ``motifs'' in graphs~\cite{Milo2002Network} to multiplex networks. There has already been interest in motif analysis in gene-interaction networks with multiple types of interactions \cite{Taylor2007Network}, in food webs that can be represented using directed ordered networks~\cite{Paulau2015Motif}, and in brain networks with both anatomical and functional connections \cite{batt2016}. Methods based on counting the number of isomorphic subgraphs, such as motif analysis, work best if the number of isomorphism classes is relatively small.  Such analysis also necessitates the investigation of isomorphisms for their own sake, and they thereby serve as an important motivation (as well as an obvious future direction) for the present work. In Fig.~\ref{fig:mplex_3_2}, we enumerate all of the possible isomorphisms in connected multiplex networks with 3 vertices and 2 layers. We indicate each of the $16$ vertex-isomorphism classes and $10$ vertex-layer-isomorphism classes.

The problem of counting the nonisomorphic graphs that have some restrictions is known as the ``graph enumeration problem'' in graph theory, and such problems can be extended to multiplex networks (or multilayer networks in general) using the theory that we have introduced in the present paper. The number of undirected graphs with a fixed set of $n$ vertices is $2^{\binom{n}{2}}$, and the number of nonisomorphic graphs also grows very quickly with $n$. In multiplex networks, the analogous problem is to count the number of multiplex networks with $n$ vertices and $b$ layers. For vertex-aligned multiplex networks, the number of networks is $2^{b\binom{n}{2}}$. In Table~\ref{table:mplexcounts}, we show the number of nonisomorphic vertex-aligned multiplex networks for small values of $n$ and $b$ when considering vertex isomorphism or vertex-layer isomorphism. We produce the numbers in the table by systematically going through all of the networks of a certain size and categorizing them according to their isomorphism class\footnote{In practice, of course, we did reduce the search space by taking advantage of symmetries in the problem.}. The layer isomorphism problem for multiplex networks does not require one to solve the graph isomorphism problem, and it is easy to solve analytically. For layer isomorphisms, the number of nonisomorphic networks in a single-aspect vertex-aligned multiplex network is given by the formula $\binom{2^{\binom{n}{2}}+b-1}{b}$.

\begin{table*}
\begin{center}
\begin{tabular}{cc|cccc}
$P_{\{ 0,1 \} }$&\multicolumn{1}{r}{}&\multicolumn{4}{c}{Vertices}\\
 &&2  & 3  & 4   &5\\
\cline{2-6}
&1& 2 & 4  & 11  & 34\\
&2& 3 & 13 & 154 & 5466\\
\multirow{-3}{*}{\begin{sideways}Layers\end{sideways}}
&3& 4 & 36 & 2381& 1540146\\
\end{tabular}
\hspace{1.5cm}
\begin{tabular}{cc|cccc}
$P_{\{ 0 \} } $&\multicolumn{1}{r}{}&\multicolumn{4}{c}{Vertices}\\
 &&2  & 3  & 4   &5\\
\cline{2-6}
&1& 2 & 4  & 11    & 34\\
&2& 4 & 20 & 276   & 10688\\
\multirow{-3}{*}{\begin{sideways}Layers\end{sideways}}
&3& 8 & 120 & 12496& 9156288\\
\end{tabular}
\end{center}
\begin{center}
\begin{tabular}{ccc|c|cccccccccccccc}
&\multicolumn{1}{r}{}&\multicolumn{16}{c}{}\\
&\multicolumn{1}{r}{}&\multicolumn{16}{c}{Number of Edges}\\
 &n&l&Total&0      & 1     & 2     &3      &4      &5 &6  &7  &8  &9  &10 &11 &12 &\\
\cline{2-17}
 &3&1&4    &{\bf 1}&{\bf 1}&{\bf 1}&{\bf 1}&       &  &   &   &   &   &   &   &   &\\
 &3&2&13   &1      & 1     &{\bf 3}&{\bf 3}&{\bf 3}&1 &1  &   &   &   &   &   &   &\\
\cline{2-3}
 &4&1&11   &1      & 1     & 2     &{\bf 3}&2      &1 &1  &   &   &   &   &   &   &\\
 &4&2&154  &1      & 1     & 5     &9      &20     &24&{\bf 34} &24 &20 &9  &5  &1  &1  &\\
 &4&3&2381 &1      & 1     & 5     &15     &39     &88&178&280&375&{\bf 417}&375&280&178&\dots \\
\multirow{-4}{*}{\begin{sideways}$P_{\{ 0,1 \} }$\end{sideways}}
 &4&4&34797&1      & 1     & 5     &15     &50     &132&366&800&1619&2715&4005&4973&{\bf 5433}&\dots \\ 
\end{tabular}
\end{center}
\label{table:mplexcounts}
\caption{(Top) Number of isomorphism classes in multiplex networks for (left) vertex-layer isomorphisms and (right) vertex isomorphisms. (Bottom) The numbers of isomorphism classes with a given number of edges. All of the rows are symmetric around the maximum value(s), which we indicate in bold.
The isomorphism classes were enumerated using \protect\cite{pymnet}.
}
\end{table*}


\subsection{Vertex-Colored Networks} \label{sec:vertexcolored}

One can represent networks with multiple types (i.e., colors, labels, etc.) of vertices using the vertex-colored graphs that we discussed in Section~\ref{sec:vertexcoloredisom}. One can also map structures such as networks of networks, interconnected networks, and interdependent networks into the same class of multilayer networks~\cite{Kivela2014Multilayer}, because one can mark each subnetworks in any of these structures using a given vertex color.

One can map vertex-colored networks into multilayer networks by considering each color as a layer. One then adds vertices to the layer that corresponds to their color. Each vertex thus occurs in only a single layer, and one can add edges between the vertices in the resulting multilayer network exactly as they appear in the vertex-colored network. That is, in this multilayer-network representation, all inter-layer and intra-layer edges are possible.

Vertex isomorphisms in this case are the normal isomorphisms of vertex-colored graphs, as vertex labels can be permuted but the colors are left unchanged. In layer isomorphisms, the vertex labels must be left untouched, but the colors can be permuted. For example, consider two networks with the same topology but different colorings that correspond to vertex classifications (e.g., community assignments \cite{Porter2009}) of vertices. Two networks are then layer isomorphic if the two vertex classifications are the same. In a vertex-layer isomorphism, one can permute both the vertex names and the colors.


\subsection{Temporal Networks}
\label{sec:temporal}

\begin{figure*}[!htp]
\includegraphics[width=1.0\linewidth]{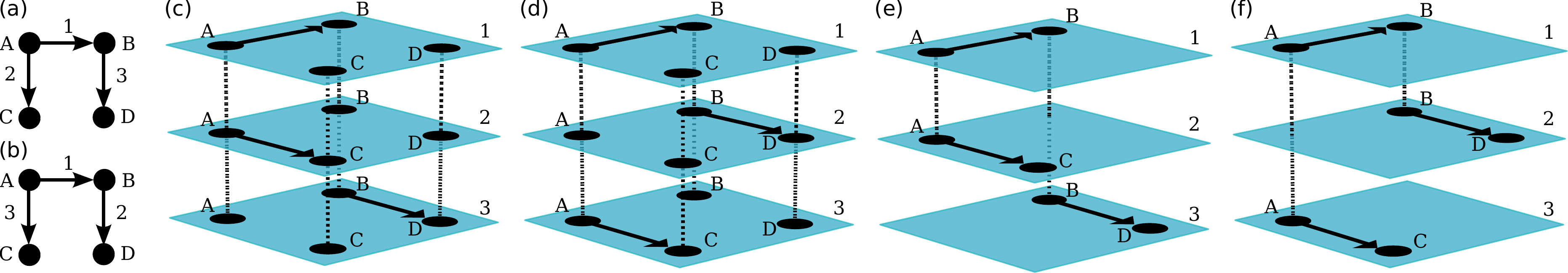}
\caption{ (a,b) Two event-based directed temporal networks that were used as an example in Ref.~\protect\cite{Kovanen2011Temporal} to illustrate the difference between temporal motifs and flow motifs. The two temporal networks correspond to two distinct temporal motifs (i.e., two distinct isomorphism classes) but to a same flow motif (i.e., the same isomorphism class). The numbers next to the edges are times at which events take place. (c,d) Representations of the two temporal networks as vertex-aligned multiplex networks in which each vertex is present on each layer and the layers are ordinally coupled. This representation leads to the same isomorphism as used for temporal motifs in Ref.~\protect\cite{Kovanen2011Temporal}, and the two multilayer networks are not isomorphic, because the coupling edges fully determine the relative order of all layers. (e,f) Representation of the two temporal networks as non-vertex-aligned multiplex networks. In this representation, vertices are only present on layers in which they are active, and they are only adjacent to their replicas in other layers that participate in events. Consequently, similar to the isomorphisms that were used to define flow motifs in Ref.~\protect\cite{Kovanen2011Temporal}, the relative order of events is only important for events that are adjacent. The two multilayer networks constructed in this way are thus vertex-layer isomorphic.
Multilayer-network illustrations produced using \protect\cite{pymnet}.
}
\label{fig:temporal}
\end{figure*}

Temporal networks in which each edge and vertex are present only at certain time instances arise in a large variety of scientific disciplines (e.g., sociology, cell biology, ecology, communication, infrastructure, and more)~\cite{Holme2012Temporal}.  (One can also think about temporal networks with intervals of activity or with continuous time.) One can represent such temporal networks as multilayer networks~\cite{DeDomenico2013Mathematical,Kivela2014Multilayer}, although this is not the usual framework that has been used to study them.  (See \cite{Mucha2010Community} for an early study that used this perspective.) Representing temporal networks as multilayer networks allows one to use ideas and methodology from the theory of multilayer networks to study them, and this has already been profitable in application areas such as political science \cite{Mucha2010Community}, neuroscience \cite{Bassett2011}, finance \cite{bazzi2015}, and sociology \cite{myers2015}.
More typically, one represents temporal networks either as contact sequences or time sequence of graphs~\cite{Holme2012Temporal}. Sequences of graphs are very similar construction to multiplex networks, where the key difference is that the order of the graphs in the sequence is important. One can map this type of temporal network to a multilayer network in very similar way as with multiplex networks. For temporal networks, however, one typically uses ordinal coupling instead of categorical coupling, although it is possible to be more general \cite{Kivela2014Multilayer}. (In other words, instead of coupling all of the layer together, one only couples consecutive layers~\cite{Mucha2010Community,Kivela2014Multilayer}.)

A contact sequence consists of a set of triplets $(u,v,t)$ that each represents a (possibly directed) contact between vertices $u$ and $v$ at time $t$. It is common to represent contact-sequence data as a sequence of very sparse graphs in which each distinct time stamp corresponds to a graph, and two vertices are adjacent in such a graph if they participate in an event at that time stamp~\cite{Holme2012Temporal}. This representation leads naturally to the multiplex-like multilayer network representation of contact sequences that we described above. Alternatively, one can represent each event as a layer that only includes the two (or potentially more) vertices that participate in the event. The two vertices in the layer are each adjacent to its replicas in temporally adjacent layers.  (See our earlier discussion of ordinal coupling.) These two alternative representations of temporal networks induce different isomorphism relations, and this difference is related to the difference between the temporal motifs and flow motifs from Ref.~\cite{Kovanen2011Temporal}. We illustrate this distinction using an example in Fig.~\ref{fig:temporal}. 

Contact sequences can also include delay or duration of the contact~\cite{Holme2012Temporal}. The delay (or latency) implies that the effect of a contact is not instantaneous. For example, in a temporal network of airline traffic, one can construe the flight time of each flight as a delay, and this can have an effect on the temporal paths and dynamical processes on the network~\cite{Pan2011Path}. This type of temporal network can also be represented using a multilayer-network framework~\cite{Kivela2014Multilayer}. For example, a flight that leaves city $A$ at time $t_1$ and arrives in city $B$ at time $t_2$ is represented as an edge from vertex $A$ in layer $t_1$ to vertex $B$ in layer $t_2$. Consequently, multilayer network isomorphisms can also be used for temporal networks with delays.

In a network that is purely temporal, and which thus has only a single aspect, there are three different possible multilayer isomorphisms. (1) Two temporal networks are vertex-isomorphic if they exhibit the same temporal patterns at exactly the same time but between (possibly) different vertices. (2) Two temporal networks are layer-isomorphic if they exhibit exactly the same temporal patterns with exactly the same vertices, although the actual times (though not the relative order of events) can change. (3)  Two temporal networks are vertex-layer isomorphic if they exhibit exactly the same temporal pattern, though possibly using different vertices, but the actual times (although not the relative order of events) can change.


\section{Conclusions and Discussion} \label{sec:conclusions}

The theory of multilayer network isomorphisms illustrates the power of the multilayer-network formalism: Any concept or method that can be defined for general multilayer networks immediately yields the same concept or method for any type of network that can be construed as a type of multilayer network. The interpretation of the concepts or methods depends on the application and scientific question of interest, but the underlying mathematics is the same. In this sense, multilayer networks allow one to return to the early 
days of network science in which simple graphs were used to represent myriad types of systems and the same tools could be applied to all of them. The key difference is that multilayer networks allow one to represent much richer and application-specific structures.

Going from graphs to multilayer networks adds a ``degree of freedom'' to ordinary networks (or multiple degrees of freedom if the number of aspects is larger than $1$), and generalizing concepts defined for graphs thus typically leads to multiple alternative definitions~\cite{Kivela2014Multilayer}. This is also true for graph isomorphisms and any isomorphism-based methods in multilayer networks, and this is underscores why it is important to identify multiple types of multilayer network isomorphisms. Given a problem under study, one still needs to decide which of these generalizations to use. Naturally, one can also examine multiple types of isomorphisms.

Our work on multilayer network isomorphisms lays the foundation for many future research directions in the study of multilayer networks. Motif analysis can now be generalized for any type of multilayer network once one defines a proper null model for the type of multilayer network under study. A good selection of network models already exist both for multiplex networks and for vertex-colored networks and similar structures~\cite{Kivela2014Multilayer}. Another straightforward application of isomorphisms in multilayer networks is the calculation of structural roles \cite{doreianbook,Borgatti1992Notions} by defining two vertices to be structurally equivalent if they are equivalent under an automorphism.  One can also examine other types of role equivalence in a multilayer setting.

One of the challenges in isomorphism-based analysis methods is that they are computationally challenging even for ordinary graphs. We introduced a computationally efficient way of deducing if two multilayer networks are isomorphic and calculating multilayer network certificates by reducing the problem to the isomorphism problem for vertex-colored graphs. Although this method is efficient for general multilayer networks, there is room for improvement when one is only considering a specific type of multilayer network (such as multiplex networks). 

Our theory also forms a basis for methods that still need some additional work to be generalized for multilayer networks. For example, in interesting direction would be to define ``approximate isomorphisms'' or inexact graph matching~\cite{Conte2004Thirty} along with a way to measure how close one is to achieving an isomorphism. This would, in turn, allow one to define similarity measures between multilayer networks and techniques for ``aligning'' two multilayer networks. It would also make it possible to relax the conditions in role equivalence to better study structural roles in multilayer networks.

Perhaps the most exciting direction in research on multilayer networks is the development of methods and models that are not direct generalizations of any of the traditional methods and models for ordinary graphs \cite{Kivela2014Multilayer}. The fact that there are multiple types of isomorphisms opens up the possibility to help develop such methodology by comparing different types of isomorphism classes.  We also believe that there will be an increasing need for the study of networks that have multiple aspects (e.g., both time-dependence and multiplexity), and our isomorphism framework is ready to be used for such networks. 


\section{Acknowledgements} \label{acks}

Both authors were supported by the European Commission FET-Proactive project PLEXMATH (Grant No. 317614). We thank Robert Gevorkyan and Puck Rombach for helpful comments.

\bibliography{mlayerisom}


\section{Proofs}

\subsection{Proofs  of basic properties of isomorphism and automorphism groups}
\label{sec:proofs}

\begin{proof}[Proof of Proposition \ref{prop:autoprop}]
(1) Take any $\bs\zeta \in \mathrm{Aut}_{p^\prime}(M)$. It follows that $M^{\bs\zeta}=M$ and $\bs\zeta \in P_p$ because $\bs\zeta \in P_{p^\prime}$. That is, $\bs\zeta \in \mathrm{Aut}_{p}(M)$.

(2) Both $\mathrm{Aut}_{p_1}(M)$ and $\mathrm{Aut}_{p_2}(M)$ are subgroups of $ \mathrm{Aut}_p(M)$ because of (1). Their direct product is a group if they commute. 
Take any $\bs\zeta \in \mathrm{Aut}_{p_1}(M)$ and $\bs\zeta^\prime \in \mathrm{Aut}_{p_2}(M)$. We have $(\bs\zeta\bs\zeta^\prime)_a=(\bs\zeta \bs{1})_a=(\bs{1} \bs\zeta)_a=(\bs\zeta^\prime\bs\zeta)_a$ if $a \in p_1$, $(\bs\zeta\bs\zeta^\prime)_a=(\bs{1} \bs\zeta^\prime)_a=(\bs\zeta^\prime \bs{1})_a=(\bs\zeta^\prime\bs\zeta)_a$ if $a \in p_2$, and  $(\bs\zeta\bs\zeta^\prime)_a=(\bs{1})_a=(\bs\zeta^\prime\bs\zeta)_a$ if $a \notin p_1,p_2$. Therefore, $\bs\zeta \bs\zeta^\prime = \bs\zeta^\prime \bs\zeta$ and $\mathrm{Aut}_{p_1}(M)\mathrm{Aut}_{p_2}(M)=\mathrm{Aut}_{p_2}(M)\mathrm{Aut}_{p_1}(M)$.

(3) Let us look at arbitrary aspect $a$. Because $p_i \cap p_j = \emptyset$ for all $i \neq j$, it follows that $a$ is either a member of exactly one $p_i$ or of none of them. If $a$ is not in any $p_i$, then $D_a^{p_i}=\mathbbm{1}_{L_a}$ and $\zeta^{(i)}_a=1_{L_a}$ for all $i$. However, if $a \in p_j$ (i.e., the aspect $a$ is in exactly one set), then $(\prod_i \bs\zeta^{(i)})_a=(\bs\zeta^{(j)})_a$, and it thus follows that $(\bs\zeta^{(j)})_a=1_{L_a}$. Because $a$ is arbitrary, we have shown that $\bs\zeta^{(i)}=\bs{1}$ for all $i$.
\end{proof}

\begin{proof}[Proof of Proposition \ref{proposition:aspectpermutation}]
We first show that $\mathrm{Iso}_{p}(M,M^\prime) \subseteq I_{\sigma^{-1}} [\mathrm{Iso}_{p^{\sigma}}(A_\sigma(M),A_\sigma(M^\prime))]$. We consider any $\bs\zeta \in \mathrm{Iso}_p(M,M^\prime)$ and show that $I_\sigma(\bs\zeta) \in \mathrm{Iso}_{p^{\sigma}}(A_\sigma(M),A_\sigma(M^\prime))$. By a direct calculation, $A_\sigma(M)^{I_\sigma(\bs\zeta)}=A_\sigma(M^\prime)$: for vertex-layer tuples, $(I_\sigma(V_M))^{I_\sigma(\bs\zeta)}=\{ v_{\sigma^{-1}(0)}^{\zeta_{\sigma^{-1}(0)}},\dots,v_{\sigma^{-1}(d)}^{\zeta_{\sigma^{-1}(d)}} \mid {\bf v} \in V_M\} = I_\sigma(V_M^{\bs\zeta})=I_\sigma(V_M^\prime)$; for edges, $\{ (I_\sigma({\bf v}),I_\sigma({\bf u})) \mid ({\bf v},{\bf u}) \in E_M\}^{I_\sigma(\bs\zeta)} = \{ (I_\sigma({\bf v}^{\bs\zeta}),I_\sigma({\bf u}^{\bs\zeta})) \mid ({\bf v},{\bf u}) \in E_M\} = \{ (I_\sigma({\bf v}),I_\sigma({\bf u})) \mid ({\bf v},{\bf u}) \in E_M^\prime\}$; for vertices, $L_{\sigma^{-1}(0)}^{\zeta_{\sigma^{-1}(0)}} =L_{\sigma^{-1}(0)}^\prime$; and for elementary layers, $( \{ L_{\sigma^{-1}(a)}\}_{a=1}^d)^{I_\sigma(\bs\zeta)}= \{ L_{\sigma^{-1}(a)}^{\zeta_{\sigma^{-1}(a)}}\}_{a=1}^d=\{ L_{\sigma^{-1}(a)}^\prime\}_{a=1}^d$, because $L_a^{\zeta_a}=L_a^\prime$ for all $a$. 

Now we need to show that $I_\sigma(\bs\zeta)$ is an acceptable mapping for the isomorphism on the right-hand side of equation (\ref{aspectpermrelation}). Note that the definition of $P_p$ in equation (\ref{eq:ppdef}) depends on the sets $\{ L_a \}_0^d$of elementary layers, and these sets are different in the two isomorphisms in the two sides of equation (\ref{aspectpermrelation}). We write this dependency explicitly, so that $P_p$ in the left isomorphism becomes $P_p(\{ L_a \}_0^d)$ and $P_p^{\sigma}$ in the right isomorphism becomes $P_{p^{\sigma}}(\{ L_{\sigma^{-1}(a)} \}_0^d)$. With this notation, $I_\sigma(P_p(\{ L_a \}_0^d))=P_{p^{\sigma}}(\{ L_{\sigma^{-1}(a)} \}_0^d)$, so $\bs\zeta \in P_p(\{ L_a \}_0^d) \implies I_\sigma(\bs\zeta) \in P_{p^{\sigma}}(\{ L_{\sigma^{-1}(a)} \}_0^d)$.

Now that we know that $\mathrm{Iso}_{p}(M,M^\prime) \subseteq I_{\sigma^{-1}} [\mathrm{Iso}_{p^{\sigma}}(A_\sigma(M),A_\sigma(M^\prime))]$ for any aspect permutation $\sigma$, we can use the aspect permutation $\sigma^{-1}$ instead of $\sigma$. Consequently, we can write 
$	I_{\sigma^{-1}} [\mathrm{Iso}_{p^{\sigma}}(A_\sigma(M),A_\sigma(M^\prime))] \subseteq I_{\sigma^{-1}} [I_{\sigma} [\mathrm{Iso}_{(p^{\sigma})^{\sigma^{-1}}}(A_{\sigma^{-1}}(A_\sigma(M)),A_{\sigma^{-1}}(A_\sigma(M^\prime)))]]  =\mathrm{Iso}_{p}(M,M^\prime)\,.
$
\end{proof}


\subsection{Proof of the Reduction Theorem}
\label{sec:proof_reduction}

We will need the following lemma for our proof of Theorem~\ref{theorem:mlayerisom}.
\begin{lemma}
\label{lemma:mapping}
Suppose that $f:\mathcal{M} \to \mathcal{G}_c$ and $g$ maps permutations $P_p$ of $M \in \mathcal{M}$ to permutations of $G_c \in f(\mathcal{M})$. In addition, we suppose that the following conditions hold:
\begin{enumerate}
\item[(1)] $f$ and $g$ are injective;
\item[(2)] $f(M)^\gamma=f(M^\prime) \implies \gamma \in g(P_p)$;
\item[(3)] for all $\zeta \in P_p$, we have $f(M^\zeta)=f(M)^{g(\zeta)}$.
\end{enumerate}
It then follows that $\mathrm{Iso}_p(M,M^\prime)=g^{-1}(\mathrm{Iso}(f(M),f(M^\prime)))$.
\end{lemma}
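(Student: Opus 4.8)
The plan is to prove the set equality by showing that $g$ restricts to a \emph{bijection} between $\mathrm{Iso}_p(M,M^\prime)$ and $\mathrm{Iso}(f(M),f(M^\prime))$; the claimed identity $\mathrm{Iso}_p(M,M^\prime)=g^{-1}(\mathrm{Iso}(f(M),f(M^\prime)))$ then follows immediately upon inverting this bijection, with condition (1) supplying the injectivity of $g$ that makes $g^{-1}$ a genuine function on the relevant set. I would establish the bijection via two inclusions, namely that $g$ maps $\mathrm{Iso}_p$ \emph{into} and \emph{onto} $\mathrm{Iso}$.

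First I would verify that $g(\mathrm{Iso}_p(M,M^\prime)) \subseteq \mathrm{Iso}(f(M),f(M^\prime))$. Take any $\bs\zeta \in \mathrm{Iso}_p(M,M^\prime)$, so that $\bs\zeta \in P_p$ and $M^{\bs\zeta}=M^\prime$. Applying $f$ and using the intertwining property (3), we get $f(M)^{g(\bs\zeta)}=f(M^{\bs\zeta})=f(M^\prime)$, which says precisely that $g(\bs\zeta)$ is an isomorphism of the vertex-colored graphs $f(M)$ and $f(M^\prime)$. This direction is routine and uses only condition (3).

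The main work, and the step where the nontrivial hypotheses enter, is the reverse inclusion $\mathrm{Iso}(f(M),f(M^\prime)) \subseteq g(\mathrm{Iso}_p(M,M^\prime))$. Given a graph isomorphism $\gamma$ with $f(M)^\gamma=f(M^\prime)$, condition (2) guarantees that $\gamma$ is not a spurious isomorphism lying outside the range of $g$: it forces $\gamma \in g(P_p)$, so $\gamma=g(\bs\zeta)$ for some $\bs\zeta \in P_p$ that is unique by the injectivity of $g$. I would then invoke (3) once more to write $f(M^{\bs\zeta})=f(M)^{g(\bs\zeta)}=f(M)^\gamma=f(M^\prime)$, and finally use the injectivity of $f$ (again condition (1)) to cancel $f$ and conclude $M^{\bs\zeta}=M^\prime$, i.e. $\bs\zeta \in \mathrm{Iso}_p(M,M^\prime)$ with $g(\bs\zeta)=\gamma$.

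I expect condition (2) to be the crux of the argument: it is exactly what rules out the possibility that $f(M)$ and $f(M^\prime)$ are isomorphic as colored graphs through a permutation having no multilayer counterpart. Without it, the graph-isomorphism side could strictly contain the image of $g$, and $g^{-1}$ could fail to be defined on all of $\mathrm{Iso}(f(M),f(M^\prime))$. Once both inclusions are in hand, injectivity of $g$ upgrades them to a bijection from $\mathrm{Iso}_p(M,M^\prime)$ onto $\mathrm{Iso}(f(M),f(M^\prime))$, and applying $g^{-1}$ to both sides yields the stated equality.
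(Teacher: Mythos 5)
Your proposal is correct and follows essentially the same argument as the paper: condition (3) gives the forward inclusion, and conditions (2), injectivity of $g$, condition (3), and injectivity of $f$ combine to give the reverse inclusion. Your framing of the result as a bijection between $\mathrm{Iso}_p(M,M^\prime)$ and $\mathrm{Iso}(f(M),f(M^\prime))$ that is then inverted is just a mild repackaging of the paper's direct proof of the two set inclusions for $g^{-1}(\mathrm{Iso}(f(M),f(M^\prime)))$.
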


\begin{proof}[Proof of Lemma \ref{lemma:mapping}]
Take any $\pi \in \mathrm{Iso}_p(M,M^\prime)$. Because of condition (3), it then follows that $f(M)^{g(\pi)}=f(M^\pi)=f(M^\prime)$ and thus that $g(\pi) \in \mathrm{Iso}(f(M),f(M^\prime))$. This gives $\pi \in g^{-1}(\mathrm{Iso}(f(M),f(M^\prime)))$ and $\mathrm{Iso}_p(M,M^\prime) \subseteq g^{-1}(\mathrm{Iso}(f(M),f(M^\prime)))$.
Now let $\gamma \in \mathrm{Iso}(f(M),f(M^\prime))$. Because of condition (2), $\gamma \in g(P_p)$ and $g^{-1}(\gamma) \in P_p$. Using (3), we can then write that  $M^\prime=f^{-1}(f(M^\prime))=f^{-1}(f(M)^\gamma)=f^{-1}(f(M)^{g(g^{-1}(\gamma))})=f^{-1}(f(M^{g^{-1}(\gamma)}))=M^{g^{-1}(\gamma)}$. Thus, $g^{-1}(\gamma) \in \mathrm{Iso}_p(M,M^\prime)$, which implies that $g^{-1}(\mathrm{Iso}(f(M),f(M^\prime))) \subseteq \mathrm{Iso}_p(M,M^\prime)$. Consequently, $\mathrm{Iso}_p(M,M^\prime)=g^{-1}(Iso(f(M),f(M^\prime)))$. 
\end{proof}

\begin{proof}[Proof of Theorem \ref{theorem:mlayerisom}]

We now use Lemma~\ref{lemma:mapping} to prove Theorem \ref{theorem:mlayerisom}. We prove each of the three conditions for $f_p$ and $g_p$ that we need to apply Lemma~\ref{lemma:mapping}.

\vspace{.1 in}

We begin by proving condition (1). 

First, we show that $g_p$ is injective. Take any $\bs\zeta,\bs\zeta^\prime \in P_p$ such that $g_p(\bs\zeta)=g_p(\bs\zeta^\prime)$. For any $a \notin p$, it follows by definition of $P_p$ that $\zeta_a = 1_{L_a} = \zeta^\prime_a$, where $1_{L_a}$ is an identity permutation over the set $L_a$. For $a \in p$, the definition of $g_p$ guarantees that $v^{\zeta_a} = v^{g_p(\bs\zeta)} = v^{g_p(\bs\zeta^\prime)}=v^{\zeta^\prime_a}$ for all $v \in L_a$. That is, $\bs\zeta=\bs\zeta^\prime$, so $g_p$ is injective.

We now prove that $f_p$ is injective. Take any $M, M^\prime \in \mathcal{M}$ such that $f_p(M)=f_p(M^\prime)$. It follows that $V_M \cup V_0 = V_M^\prime \cup V_0^\prime$, $E_M \cup E_0 = E_M^\prime \cup E_0^\prime$, 
and $p \cup L_{\overline{p}_1} \times \dots \times L_{\overline{p}_m} = p \cup L_{\overline{p}_1}^\prime \times \dots \times L_{\overline{p}_m}^\prime$. Because we assumed that there are no shared labels of vertices or elementary layers (and that tuples of elementary layers and vertices are not in the vertex set or in any elementary layer set), it follows that $V_M = V_M^\prime$ and $E_M = E_M^\prime$. Because $M, M^\prime \in \mathcal{M}$, it is also true that ${\bf L}={\bf L}^\prime$ and $V=V^\prime$. Thus, $M=M^\prime$ and $f_p$ is injective.

\vspace{.1 in}

We now prove condition (2). 

Consider an arbitrary $\gamma \in S_{V_{f_p(M)}}$ such that $f_p(M)^{\gamma}=f_p(M^{\prime})$. We want to construct $\bs\zeta \in P_p$ so that $g_p(\bs\zeta)=\gamma$. For any $v_a \in L_a$, we let $v_a^{\zeta_a}=v_a^{\gamma}$ if $a \in p$ and $v_a^{\zeta_a}=v_a$ if $a \in \overline{p}$. The $\bs\zeta$ defined in this way is in $P_p$ because permutations for $a \in \overline{p}$ are identity permutations and $v_a \in L_a$ yields $\pi(v_a)=a$ and thus $v_a^{\gamma} \in L_a$. 
We now have by definition that $v_a^{g_p(\bs\zeta)}=v_a^{\zeta_a}=v_a^{\gamma}$ for $v_a \in V_0$ and ${\bf v}^{g_p(\bs\zeta)}= {\bf v}^{\bs\zeta}$ for ${\bf v} \in V_M$.
If we assume that ${\bf v}^{\bs\zeta} \neq {\bf v}^{\gamma}$ for ${\bf v} \in V_M$, then there exists an $a$ such that $v_a^{\zeta_a} \neq ({\bf v}^{\gamma})_a$.
We know that $v_a^{\zeta_a}=v_a=({\bf v}^\gamma)_a$ for $a \in \overline{p}$ because of the coloring: $(v_{\overline{p}_1},\dots,v_{\overline{p}_m})=\pi({\bf v})=\pi(\gamma^{-1}({\bf v}^\gamma))=\pi^\gamma({\bf v}^\gamma) =\pi^\prime({\bf v}^\gamma)=[({\bf v}^\gamma)_{\overline{p}_1},\dots,({\bf v}^\gamma)_{\overline{p}_m}]$.
That is, it must be true that $v_a^{\zeta_a} \neq ({\bf v}^{\gamma})_a$ for $a \in p$. 
Because $M^\prime$ is constructed using the function $f_p$, we know that $(u,{\bf v}^\gamma) \in E_0^\prime$ guarantees that there exists a $b \in p$ such that $u=({\bf v}^\gamma)_b$. However, $(v_a,{\bf v})^\gamma=(v_a^\gamma,{\bf v}^\gamma)=(v_a^{\zeta_a},{\bf v}^\gamma) \in E_0^\gamma = E_0^\prime$. Thus, there is a $b \in p$ so that $v_a^{\zeta_a}=({\bf v}^\gamma)_b$, and it thus follows that $a \neq b$. This is a contradiction, because $L_a \cap L_b = \emptyset$, and it thus must be true that $v_a^{\zeta_a} = ({\bf v}^{\gamma})_a$ for all ${\bf v} \in V_M$.

\vspace{.1 in}

We now prove condition (3).

From a direct calculation, we verify that for all $\bs\zeta \in P_p$, we have $f_p(M^{\bs\zeta})=f_p(M)^{g_p(\bs\zeta)}$. 

For vertices, we write $V_M^{\bs\zeta}=V_M^{g_p(\bs\zeta)}$ and $\bigcup_{a \in p} L_a^{\zeta_a}=\bigcup_{a \in p} L_a^{g_p(\bs\zeta)}=V_0^{g_p(\bs\zeta)}$. Combining these two equations yields $V_M^{g_p(\bs\zeta)} \cup V_0^{g_p(\bs\zeta)}= (V_M \cup V_0)^{g_p(\bs\zeta)}  = V_G^{g_p(\bs\zeta)}$.

For edges, we write $E_M^{\bs\zeta}=E_M^{g_p(\bs\zeta)}$ because $E_M \subset V_M \times V_M$, and it is also true that $E_0^{\bs\zeta}=\{ (v_a^{\zeta_a},{\bf v}^{\bs\zeta}) \mid {\bf v} \in V_M, a \in p \} =  \{ (v_a,{\bf v}) \mid {\bf v} \in V_M, a \in p \}^{g_p(\bs\zeta)}=E_0^{g_p(\bs\zeta)}$. 
Combining the two equations yields $E_M^{\bs\zeta} \cup E_0^{\bs\zeta} = E_M^{g_p(\bs\zeta)} \cup E_0^{g_p(\bs\zeta)}=(E_M \cup E_0)^{g_p(\bs\zeta)}=E_G^{g_p(\bs\zeta)}$.

For the color set $C$, the permutation $\bs\zeta \in P_p$ does not change anything because it only permutes the aspects in $p$. Additionally, the permutation $g_p(\bs\zeta)$ of the vertex-colored graph does not change any vertex colors by definition.

The color map $\pi^{g_p(\bs\zeta)}({\bf v})=\pi([g_p(\bs\zeta)]^{-1}({\bf v}))=\pi(\bs\zeta^{-1}({\bf v}))=(v_{\overline{p}_1} \dots v_{\overline{p}_m})=\pi({\bf v})$ if $v_g={\bf v} \in V_M$, where the third equality is true because $\bs\zeta^{-1}({\bf v}) \in V_M$. 
Similarly, $\pi^{g_p(\bs\zeta)}(v_a)=\pi([g_p(\bs\zeta)]^{-1}(v_a))=\pi(\zeta_a^{-1}(v_a))=a=\pi(v_a)$ if $v_a \in L_a$, where the third equality is true because $\zeta_a^{-1}(v_a) \in L_a$.
\end{proof}


\subsection{Proof of Corollaries}
\label{sec:proof_corollaries}

\begin{proof}[Proof of Corollary \ref{corollary:misom1}]

These results follow immediately from Theorem \ref{theorem:mlayerisom}.

\vspace{.1 in}

(1):  $M \cong_p M' \iff \mathrm{Iso}_p(M,M^\prime) \neq \emptyset \iff g^{-1}_p[\mathrm{Iso}(f_p(M),f_p(M^\prime))] \neq \emptyset \iff f_p(M) \cong f_p(M')$.

\vspace{.1 in}

(2): Let $C$ be the complete invariant of $\cong$ for vertex-colored graphs. That is, $C(G)=C(G^\prime) \iff G \cong G^\prime$, where $G,G^\prime \in \mathcal{G_C}$. From this invariance and (1), it follows that $C(f(M))=C(f(M^\prime)) \iff f(M) \cong f(M^\prime) \iff M \cong^* M^\prime $.

\vspace{.1 in}

(3): To obtain this result, we let $M^\prime=M$ in Theorem \ref{theorem:mlayerisom}.
\end{proof}

\begin{proof}[Proof of Corollary \ref{corollary:complexity}]
The number of vertices in $f_p(M)$ (see Definition~\ref{definition:fp}) is $|V_M|+\sum_a^p |L_a|$, the number of edges is $|E_M|+|V_M||p|$, and the number of colors can be limited to the number of vertices. In the function $f_p$, constructing each vertex, edge, or vertex color consists of copying it directly from the multilayer network or doing several operations of checking if an element belongs to a set that grows polynomially with the size of $M$. 
Thus, one can use point (1) in Corollary~\ref{corollary:misom1} to create a reduction that is polynomial in time (and linear in space) from $\mathit{MGI}_p$ to the vertex-colored graph isomorphism problem, which is known to be in $\mathit{GI}$. 
One can reduce in polynomial time any problem in $\mathit{GI}$ to $\mathit{MGI}_p$ by mapping the two graphs to the following multilayer networks. Choose $a \in p$ and use the set of vertices in the graph as a set of elementary layers in the aspect $a$. For the aspects $b \neq a$, add a single layer $l_b$ to the remaining elementary layer sets. For each vertex $u \in V$ in the graph, create a single vertex-layer ${\bf v}$ such that $v_a=u$ and $v_b=l_b$. (In other words, create a vertex ${\bf v}= (v_1,\dots,v_a,\dots,v_d)=(l_1,\dots,u,\dots,l_d)$.)  For each edge $(u,w) \in E$ in the graph, add an edge $[(l_1,\dots,u,\dots,l_d),(l_1,\dots,w,\dots,l_d)]$ to the multilayer network. The two multilayer networks are isomorphic according to $\cong_p$ exactly when the two graphs are isomorphic.
\end{proof}

\end{document}